\documentclass{article}

% packages
\usepackage[reset,a4paper,vmargin=3truecm,hmargin=2truecm]{geometry}
\usepackage{amsmath,amsthm,amssymb,amscd,epic}
\usepackage{graphicx,bbm,amsthm,graphicx}
\usepackage{mathtools}
\usepackage{leftidx}
\usepackage{hyperref}
\usepackage{cool}
\usepackage{bm}
\usepackage{bbold}
\usepackage{subfig}
%\usepackage{refcheck}

% Theorem environments
\theoremstyle{plain}
\newtheorem{thm}{Theorem}[section]
\newtheorem{lem}[thm]{Lemma}
\newtheorem{prop}[thm]{Proposition}

\newtheorem{cor}[thm]{Corollary}

\theoremstyle{definition}
\newtheorem{dfn}{Definition}[section]
\newtheorem{ex}{Example}[section]

\theoremstyle{remark}
\newtheorem{rem}{Remark}[section]

% special sets and vectors

%% Q. Rabi model and constraint polynomials
%\newcommand{\HRabi}[1]{H_{\text{\upshape R}}^{#1}} % A. Q. Rabi Model Hamiltonian
\newcommand{\HRabi}[1]{H_{#1}} % A. Q. Rabi Model Hamiltonian
 % A. Q. Rabi Model Hamiltonian
 % constraint polynomial
 % normalized constraint polynomial
 % quotient of constraint polynomial
 % quotient of constraint polynomial
 % generating function of ncp

% Natural, Integers, Rationals, etc.
\newcommand{\N}{\mathbb{N}} % natural numbers
\newcommand{\Z}{\mathbb{Z}} % integers
\newcommand{\Q}{\mathbb{Q}} % rational numbers
\newcommand{\R}{\mathbb{R}} % real numbers
\newcommand{\C}{\mathbb{C}} % complex numbers

\DeclareMathOperator{\Spec}{Spec}

% matrix

% Prefix

% special 2x2 matrices

%epsilon

\title{Remarks on the hidden symmetry of the asymmetric quantum Rabi model}
\author{Cid Reyes-Bustos, Daniel Braak and Masato Wakayama}
% \author{Daniel Braak, Cid Reyes-Bustos and Masato Wakayama}

\begin{document}

\maketitle

\begin{abstract}  
  The symmetric quantum Rabi model (QRM) is integrable due to a discrete $\mathbb{Z}_2$-symmetry of the Hamiltonian.
  This symmetry is generated by a known involution operator, measuring the parity of the eigenfunctions. An experimentally
  relevant modification of the QRM, the asymmetric (or biased) quantum Rabi model (AQRM) is no longer invariant under this operator,
  but shows nevertheless characteristic degeneracies of its spectrum for half-integer values of $\epsilon$, the parameter governing the
  asymmetry. 
  In an interesting recent work (arXiv:2010.02496), an operator has been identified which commutes with the Hamiltonian
  $\HRabi{\epsilon}$ of the asymmetric quantum Rabi model for $\epsilon=\frac\ell2 (\ell\in \Z)$ and appears to be the analogue of the parity in the
  symmetric case. We prove several important properties of this operator, notably, that it is algebraically
  independent of the Hamiltonian $\HRabi{\epsilon}$ and that it essentially generates the commutant of $\HRabi{\epsilon}$.
  Then, the expected $\mathbb{Z}_2$-symmetry manifests the fact that the commuting operator can be captured in the two-fold cover of
  the algebra generated by $\HRabi{\epsilon}$, that is, the polynomial ring in $\HRabi{\epsilon}$.

\textbf{Keywords:} quantum Rabi model, Weyl algebra, hidden symmetry, hyperelliptic curves, %parity decomposition, constraint polynomials, 
degeneracy.
\end{abstract}

%\tableofcontents

%------------------------------------------------------
\section{Introduction} %\label{sec:introduction}
%------------------------------------------------------

One of the most simple and fundamental models in quantum optics is the Jaynes-Cummings model with Hamiltonian
\begin{equation}
 \label{eq:JC}
  H_{\textrm{JC}} = a^{\dag}a + \Delta \sigma_z + g(a\sigma_+ + a^\dag\sigma_-),
\end{equation}
acting in $\mathcal{H}=L^2(\R)\otimes\C^2$,
with Pauli matrices $\sigma_z$ and $\sigma_{\pm}=\frac12(\sigma_x\pm i\sigma_y)$.
The $a$ ($a^\dag$) are annihilation (creation) operators of a single bosonic mode. Energies are measured in units of the mode frequency ($\hbar=1$). The Hamiltonian \eqref{eq:JC} is invariant under transformations
$\hat{U}(\phi)=\exp i\phi(a^\dag a +\sigma_+\sigma_-)$. Because the spectrum of the generator
$\hat{C}=a^\dag a +\sigma_+\sigma_-$ is $\N_0$, the continuous real parameter $\phi$ may be restricted to the interval $[-\pi,\pi]$, the corresponding symmetry group is thus $U(1)$.

The Jaynes-Cummings model is actually the rotating wave approximation of the quantum Rabi model
\begin{equation}
  \label{eq:Rabi0}
  H_{\textrm{R}} = a^{\dag}a + \Delta \sigma_z + g (a + a^{\dag}) \sigma_x,
\end{equation}
which is still invariant under $\hat{U}(\pi)=-\mathcal{P}\sigma_z$ with $\mathcal{P} = \exp(i\pi a^{\dag} a)$. We have $\hat{U}^2=I$, where $I$ is the identity in $\mathcal{H}$. $\hat{U}(\pi)=-J_0$ is thus an involution and corresponds to the symmetry group $\Z_2$ of \eqref{eq:Rabi0} \cite{B2019}.

While it is pretty obvious that the strong continuous symmetry of the Jaynes-Cummings model is sufficient to solve this model analytically (the Hilbert space separates into the direct sum of two-dimensional subspaces, dynamically invariant under $H_{\textrm{JC}}$ and labeled by an integer quantum number, the eigenvalue of $\hat{C}$), it is not clear whether the same is true for the weak $\Z_2$-symmetry of $H_{\textrm{R}}$. It was argued in \cite{B2013MfI} that this is indeed the case and $H_{\textrm{R}}$ is integrable with respect to the criterion for quantum integrability proposed in \cite{B2011}. Here, we have only two dynamically invariant subspaces $\mathcal{H}_\pm$, labeled by the eigenvalues $\pm1$ of $J_0$, each infinite dimensional, and there is an operator $S_0$, independent from $g$ and $\Delta$, which block-diagonalizes $H_{\textrm{R}}$ as 
\[
  S_0^{-1} H_{\textrm{R}} S_0 =
  \begin{bmatrix}
    H_{+} & 0 \\
    0 & H_{-}
  \end{bmatrix}.
  \]
The spectral graph of $H_{\textrm{R}}$ as function of the coupling $g$ shows characteristic intersections (spectral degeneracies) at certain values of the parameters $g$ and $\Delta$ if an eigenvalue of $H_+$ coincides with one of $H_-$. The degenerate energies correspond to eigenfunctions of $H_{\textrm{R}}$ which are not eigenfunctions of $J_0$ and thus span a two-dimensional (reducible) representation of the symmetry group $\Z_2$. All possible spectral degeneracies of  $H_{\textrm{R}}$ appearing at special values of $g$ and $\Delta$ are related to this symmetry, because the spectra of $H_\pm$ are always non-degenerate \cite{B2013MfI}.  

A simple and physically relevant  generalization of \eqref{eq:Rabi0} (see \cite{Y2017}) is the asymmetric quantum Rabi model,
\begin{equation}
  \label{eq:Rabi}
  \HRabi{\epsilon} := a^{\dag}a + \Delta \sigma_z + g (a + a^{\dag}) \sigma_x + \epsilon \sigma_x,
\end{equation}
where the term $\epsilon\sigma_x$ does not commute with $J_0$ and thus breaks the symmetry of \eqref{eq:Rabi}. Indeed, there are no longer easily recognizable operators like the parity $J_0$ which commute with $\HRabi{\epsilon}$ for $\epsilon\neq0$ and one would guess that the spectrum is non-degenerate for all values of $g$ and $\Delta$. 

This is indeed the case for $\epsilon \notin \frac12 \Z$. However, in the case of half-integer $\epsilon$, degeneracies were observed by Li and Batchelor in \cite{LB2015JPA} for the case $\epsilon = \frac12$ and later shown to happen in the general case by Kazufumi Kimoto and the authors in \cite{KRW2017}, (see also \cite{W2016JPA}), where the degeneracy spectrum was fully clarified. Based on this fact, Semple and Kollar studied the asymptotic behavior of observables in the asymmetric model \cite{SK2017}. Besides these works, the presence of the degeneracies in the spectrum of $\HRabi{\epsilon}$ for the half-integer case leads to the natural question whether these degeneracies are due to a symmetry of $\HRabi{\epsilon}$ which is not ``apparent'' like the parity symmetry of $\HRabi{0}$ but ``hidden''. In this case, one would have to identify the corresponding symmetry group. An initial guess is again $\Z_2$, because the pattern of the degeneracies appearing at half-integer $\epsilon$ are the same as for the symmetric case (two intersecting ladders in the spectral graph as function of $g$ or $\Delta$ \cite{B2019}).

The search for the hidden symmetry was initiated by Ashhab in \cite{A2020}, who showed that the operator commuting with $\HRabi{\epsilon}$, if it exists, must depend on the parameters $g$ and $\Delta$, in contrast to $J_0$ which leads to invariant subspaces $\mathcal{H}_\pm$ which are independent from $g$ and $\Delta$.

In \cite{GD2013}, Gardas and Dajka, using a method based on the Banach fixed point theorem, studied the case of
$\epsilon \notin \frac12 \Z$ with some restriction to the parameters. In particular, when $\epsilon \notin \frac12 \Z$ and $\Delta/\epsilon > \frac{\pi}{2}$, there
is a self-adjoint involution operator $\mathbb{J}_{\epsilon}$ such that
\[
  [\HRabi{\epsilon},\mathbb{J}_{\epsilon}]=0.
\]
It is important to note that the operator  $\mathbb{J}_{\epsilon}$ is given in terms of the solution $X_0$ of a certain Riccati equation
with operator coefficients. The authors also show that by using the solution $X_0$, one can define an operator $S_{\epsilon}$ such that
\[
  S_\epsilon^{-1} \HRabi{\epsilon} S_\epsilon =
  \begin{bmatrix}
    (a^\dag+g)(a+g) +\epsilon -g^2 + \Delta X_0  & 0 \\
    0 & (a^\dag-g)(a-g) -\epsilon -g^2 - \Delta X_0^{\dag}
  \end{bmatrix},
\]
generalizing the case of the QRM. Unfortunately, solutions of the associated Riccati equation are only known explicitly for
the (initially excluded) case $\epsilon=0$, where one recovers the $J_0$ described above. It seems, the result in \cite{GD2013} is enough to show that there is a $\Z_2$-symmetry in the AQRM (for $\epsilon$ not a half-integer), and that the spectrum splits into two subsets which, however, never intersect, the hallmark of the $\Z_2$-symmetry of the QRM. Indeed, as we have explained above, the spectrum of $\HRabi{\epsilon}$ is multiplicity free for the case of non-half integer $\epsilon$ \cite{W2016JPA}. Therefore, the question arises whether the Gardas-Dajka operator $\mathbb{J}_\epsilon$ is non-trivial in the sense that it is \textit{independent} from the Hamiltonian $\HRabi{\epsilon}$.

The problem to determine whether an operator $X$ with the property $[X,H]=0$ is independent from the Hamiltonian $H$ or not has been the major obstacle to define integrability for quantum systems \cite{C2011}. At first sight, every quantum system seems to have a complete set of mutually commuting operators, namely $\hat{P}_{\lambda_i}$, the orthogonal projectors onto the eigenvectors of $H$ with eigenvalue $\lambda_i$. These operators generate obviously the commutant $\mathcal{C}(H)$ of $H$ if $H$ is self-adjoint and has a pure point spectrum. However, they cannot be the quantum analogue of the $N$ functions $L_i$ on $2N$-dimensional phase space $\mathcal{V}$ with the property $\{L_i,L_j\} =0$ in classical mechanics which render the system integrable in the sense of Liouville ($\{\cdot,\cdot\}$ denotes the Poisson bracket). For classically integrable dynamics, the $L_i$  must be independent from $L_1=H$ and from each other, which means that the 1-forms $\textrm{d}L_i$ are linearly independent at each point of the $N$-dimensional submanifold $\mathcal{M}$ of $\mathcal{V}$, characterized by $L_i=l_i$ with constants $l_i\in\R$ \cite{Ar1989}. What is the equivalent of this concept in quantum mechanics? How can one define the notion of independence for the elements of the commutant $\mathcal{C}(H)$? The operators  $\hat{P}_{\lambda_i}$ exist for every self-adjoint operator $H$ and generate $\mathcal{C}(H)$, but they are not  all independent from $H$.

The $C^\ast$ algebra $\mathcal{C}_0(H)\subset \mathcal{C}(H)$ generated by $H$ is isomorphic to $C(\Spec(H))$, the Banach algebra of continuous functions on the spectrum of $H$ \cite{R1972}.
Indeed, if $\chi_{\lambda_j}(x)$ is a continuous function with support in an interval containing only the non-degenerate eigenvalue $\lambda_j$, the projector $\hat{P}_{\lambda_j}$ is given as
$\hat{P}_{\lambda_j}=\chi_{\lambda_j}(H)$. Now we can construct ``symmetry generators'' from these projectors with arbitrary properties. For example, it is  possible to define an operator $\mathbb{J}_H$ with the properties $\mathbb{J}_H^2=I$ and $[\mathbb{J}_H,H]=0$ by dividing the spectrum of $H$ into two subsets $S_+$ and $S_-$ and define
\[
\mathbb{J}_H=\sum_{\lambda\in S_+} \chi_\lambda(H) - \sum_{\lambda\in S_-} \chi_\lambda(H).
\]
This procedure is possible only if the spectrum of $H$ is multiplicity free which is exactly the case where the construction by Gardas and Dajka works. It is not related to any special property of $H$ and clearly indicates no hidden symmetry as it applies to any multiplicity free self-adjoint operator.

On the other hand, the presence of at least one spectral degeneracy is sufficient for the existence of an operator $X\in\mathcal{C}(H)$ which is not generated by $H$ itself, $X\notin \mathcal{C}_0(H)$. But such a degeneracy should not be necessary for the existence of the non-trivial symmetry generator. The operator $X$ should still be present if the parameters of the system are continuously varied such that $H$ becomes multiplicity free, because only then it can be associated with intersections of the spectral graph. Therefore, we must consider not $\mathcal{C}_0(H)$ but a certain subalgebra which does not contain $X$ even if the spectrum is non-degenerate. Similarly, we restrict $\mathcal{C}(H)$ in a suitable way to a subalgebra $\mathcal{C}_a(H)$. Below, we shall tentatively identify $\mathcal{C}_a(H)$ with $\mathcal{C}(H)\cap\mathcal{A}$, where $\mathcal{A}$ is the algebra of analytic functions in the operators $a$ and $a^\dag$, tensored with the algebra ${\rm Mat}_2(\C)$ of $2\times2$-matrices over $\C$. Specifically, we will consider as candidates for $J_\epsilon$ the products $\mathcal{P}Q^{\epsilon}$, where $Q^\epsilon$ is a matrix-valued polynomial in $\mathcal{A}$.

In a recent work by Mangazeev, Batchelor and Bazhanov \cite{MBB2020}, for $\ell \in \Z_{\ge0}$ a general procedure is presented to obtain an operator
$J_{\frac{\ell}2}$ such that
\[
  [\HRabi{\frac{\ell}2},J_{\frac{\ell}2}] = 0,
\]
where $J_{\frac{\ell}2}$ has indeed the form given above. The authors work out examples for the cases $\ell=0,1,2$. The explicitly computed cases are such that $J_{\frac{\ell}2}^2$ is a polynomial $p_{\frac\ell2}(\HRabi{\frac{\ell}2}; g, \Delta)$ in $\HRabi{\frac{\ell}2}$ of degree $\ell$. While this method provides a general procedure for the computation, little is known on the general form of $J_{\frac{\ell}2}$ (or its square) and its properties. The authors expect that the operator are given in the form $J_{\frac{\ell}2} = \mathcal{P} Q_{\frac{\ell}2}$ for all $\ell\ge0$. The question what symmetry group should be associated with this operator was not considered in \cite{MBB2020}.

In Section \ref{sec:symmetry-aqrm} we present theoretical results regarding the operator $J_{\frac{\ell}2}$.
First, we rigorously prove its existence for any integer $\ell\geq 1$ and show that it is algebraically independent of the Hamiltonian
$\HRabi{\frac{\ell}2}$. It turns out that $J_{\frac{\ell}2}$ and $\HRabi{\frac{\ell}2}$ generate the commutant $\mathcal{C}_a(\HRabi{\frac{\ell}2})$. 
These results may be regarded as the realization of the situation described at the end of \cite{A2020}.
In addition, from these results we conclude that the expected $\mathbb{Z}_2$-symmetry, inherited from the QRM, manifests
the fact that the commuting operator $J_{\frac{\ell}2}$ is captured in the two-fold cover of the ring of polynomials $\R[\HRabi{\frac{\ell}2}]$.
In other words, the hidden symmetry is actually exhibited by the relation $J_{\frac{\ell}2}^2=p_{\frac\ell2}(\HRabi{\frac{\ell}2}; g, \Delta)$ which may be identified with the hyperelliptic curve $y^2= p_{\frac\ell2}(x,g, \Delta)$. Finally, Section \ref{sec:proofs} is devoted to
the proof of the main results .

%------------------------------------------------------
\section{Main results}
\label{sec:symmetry-aqrm}
%------------------------------------------------------

We state here our results. The proofs are given in Section \ref{sec:proofs}.

Our first theorem generalizes the computational observations presented in \cite{MBB2020}. 
In order to simplify the discussion, we consider the Hamiltonian $\HRabi{\epsilon}$ to be given in the equivalent form
\[
  \HRabi{\epsilon} := a^{\dag}a + \Delta \sigma_x + g (a + a^{\dag}) \sigma_z + \epsilon\sigma_z,
\]
obtained from \eqref{eq:Rabi} by means of a Cayley (unitary) transform.

Throughout this paper we denote by $\C[a,a^{\dag}]$, the Weyl algebra generated by the elements $a$ and $a^{\dag}$ and by
${\rm Mat}_2(\C[a,a^{\dag}])$ the $2\times2$ matrix algebra over $\C[a,a^{\dag}]$ . The degree of a monomial $a^k (a^\dag)^\ell \in \C[a,a^{\dag}]$
is defined to be $k+\ell$ and for a general element $\mathfrak{f} \in \C[a,a^{\dag}]$ the degree is defined to be the maximum degree of
the monomials appearing in $\mathfrak{f}$.

\begin{thm} \label{thm:existence}
  Fix $\epsilon \in  \frac12 \Z$ and set $ \ell = 2|\epsilon|$. There exists a unique (up to multiplication by constants) operator
  $Q_0^{(\epsilon)} \in {\rm Mat}_2(\C[a,a^{\dag}])$ with components of degree $\ell$ and with the properties listed below.
  \begin{enumerate}
  \item If $J_{\epsilon} := \mathcal{P} Q_0^{(\epsilon)}$, then $J_{\epsilon}$ is self-adjoint and
    \[
      [\HRabi{\epsilon},J_{\epsilon}] = 0.
    \]
  \item A normalization may be chosen so that the entries of $Q_0^{(\epsilon)}$ are polynomials in $\Q[a,a^\dag]$.
  \item The operator $Q_0^{(\epsilon)}$ is not a polynomial function of $\HRabi{\epsilon}$.
  \item There is a polynomial $p_{\epsilon} \in \C[x,g,\Delta]$ of degree $\ell$, uniquely defined up to multiplicative constant, such that
    \[
      J_{\epsilon}^2 = p_{\epsilon}(\HRabi{\epsilon};g,\Delta).
    \] 
  \end{enumerate}
\end{thm}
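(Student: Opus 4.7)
The plan is to translate the commutation problem into an intertwining relation in ${\rm Mat}_2(\C[a,a^\dag])$, solve it by a graded linear-algebra argument, and extract the quadratic identity using a parity grading of the extended algebra. First, since $\mathcal{P} a\mathcal{P}^{-1}=-a$, $\mathcal{P} a^\dag\mathcal{P}^{-1}=-a^\dag$, and $\mathcal{P}$ commutes with $\sigma_x,\sigma_z$, conjugation gives $\mathcal{P}^{-1}\HRabi{\epsilon}\mathcal{P}=\tilde H_\epsilon$, where $\tilde H_\epsilon$ denotes $\HRabi{\epsilon}$ with $g\mapsto -g$. Hence $[\HRabi{\epsilon},\mathcal{P} Q]=\mathcal{P}(\tilde H_\epsilon Q-Q\HRabi{\epsilon})$, and the commutation problem becomes that of finding $Q\in{\rm Mat}_2(\C[a,a^\dag])$ of degree $\ell$ satisfying
\[
\tilde H_\epsilon\, Q \;=\; Q\, \HRabi{\epsilon}.
\]

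To solve this, I write $Q=\sum_{i+j\leq\ell}M_{ij}(a^\dag)^i a^j$ with $M_{ij}\in{\rm Mat}_2(\C)$ and substitute. Collecting by total $a,a^\dag$-degree via the Weyl relations produces a triangular linear system for the $M_{ij}$ with coefficients in $\Q[g,\Delta]$. The top-degree equations determine the leading matrix $M_\ell$ up to a scalar; the lower-degree equations then fix the remaining $M_{ij}$ recursively, and the recursion terminates consistently at degree zero precisely when $\ell=2|\epsilon|$. The resulting one-dimensional solution space, together with a rational normalization of $M_\ell$, yields parts (1) and (2). Self-adjointness follows by taking the adjoint of the intertwining relation: by uniqueness $J_\epsilon^*=c\,J_\epsilon$ for some scalar $c$ with $|c|=1$, and the phase is absorbed into the chosen normalization.

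For part (3), if $Q_0^{(\epsilon)}=p(\HRabi{\epsilon})$ for some polynomial $p$, then combining the intertwining relation with the trivial commutativity $\HRabi{\epsilon}p(\HRabi{\epsilon})=p(\HRabi{\epsilon})\HRabi{\epsilon}$ yields $(\tilde H_\epsilon-\HRabi{\epsilon})\,p(\HRabi{\epsilon})=0$. But $\tilde H_\epsilon-\HRabi{\epsilon}=-2g(a+a^\dag)\sigma_z$ acts injectively on ${\rm Mat}_2(\C[a,a^\dag])$ by left multiplication (its diagonal entries $\pm 2g(a+a^\dag)$ being non-zero-divisors in the Weyl algebra), so $p(\HRabi{\epsilon})=0$ and thus $p\equiv 0$, contradicting $Q_0^{(\epsilon)}\neq 0$. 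For part (4), direct computation using $\mathcal{P}^2=I$ and $\mathcal{P} Q_0^{(\epsilon)}\mathcal{P}^{-1}=\widetilde{Q_0^{(\epsilon)}}$ gives $J_\epsilon^2=\widetilde{Q_0^{(\epsilon)}}\,Q_0^{(\epsilon)}\in{\rm Mat}_2(\C[a,a^\dag])$, an operator of $a,a^\dag$-degree $2\ell$ commuting with $\HRabi{\epsilon}$. Now impose the $\Z_2$-grading of the algebra extended by $\mathcal{P}$ into $\mathcal{P}$-free and $\mathcal{P}$-containing parts: $\HRabi{\epsilon}$ and $J_\epsilon^2$ lie in the former, while $J_\epsilon$ lies in the latter. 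If the $\mathcal{P}$-free commutant of $\HRabi{\epsilon}$ in ${\rm Mat}_2(\C[a,a^\dag])$ equals $\C[\HRabi{\epsilon}]$---a graded Weyl-algebra argument analogous to the one in Step 2---then $J_\epsilon^2=p_\epsilon(\HRabi{\epsilon})$, and degree counting forces $\deg p_\epsilon=\ell$, with uniqueness of $p_\epsilon$ following from the transcendence of $\HRabi{\epsilon}$ in ${\rm Mat}_2(\C[a,a^\dag])$.

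The main obstacle will be the verification in the second step that the recursive linear system for $Q$ is consistent precisely when $\ell=2|\epsilon|$: this termination condition is the algebraic origin of the half-integer restriction on $\epsilon$ and, implicitly, of the spectral degeneracies themselves. A secondary difficulty is the commutant identification used in the fourth step to place $J_\epsilon^2$ in $\C[\HRabi{\epsilon}]$, which is essentially the structural statement that $\HRabi{\epsilon}$ and $J_\epsilon$ generate the full commutant $\mathcal{C}_a(\HRabi{\epsilon})$.
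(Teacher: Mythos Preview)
Your reduction to the intertwining relation $\tilde H_\epsilon Q=Q\HRabi{\epsilon}$ and the recursive solution matches the paper exactly, as does your argument for part (3). Your treatment of self-adjointness via uniqueness is more elegant than the paper's: there it is verified by an explicit induction on the coefficient identities $\alpha_{n,m}=(-1)^{n+m}\alpha_{m,n}$, $\beta_{n,m}=(-1)^{n+m}\gamma_{m,n}$, etc., whereas you observe directly that $J_\epsilon^\dag$ is again of the form $\mathcal{P}Q'$ with $Q'$ of degree $\ell$ and invoke the one-dimensionality of the solution space.

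The genuine gap is in part (4). You need the $\mathcal{P}$-free commutant of $\HRabi{\epsilon}$ inside ${\rm Mat}_2(\C[a,a^\dag])$ to equal $\C[\HRabi{\epsilon}]$, and you defer this to ``a graded Weyl-algebra argument analogous to the one in Step~2''. But the recurrence structure for $[\HRabi{\epsilon},R]=0$ is \emph{not} the same as for the intertwining relation: the anticommutators $g\{a+a^\dag,\cdot\}$ now sit in the off-diagonal equations rather than the diagonal ones, so at top degree it is $\beta,\gamma$ that are forced to vanish while the diagonal entries $\alpha,\delta$ carry the free parameters. Carrying this through to the claimed commutant identification is a separate (and non-obvious) computation that you have not done. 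The paper avoids this entirely by a trick you may find instructive: rather than analysing the even-parity commutant, it passes to $J_\epsilon^3=\mathcal{P}\bigl(Q_0^{(\epsilon)}J_\epsilon^2\bigr)$, so that $Q_0^{(\epsilon)}J_\epsilon^2$ is a solution of the \emph{intertwining} relation, which has already been shown to be $Q_0^{(\epsilon)}\C[\HRabi{\epsilon}]$ (this is Theorem~\ref{thm:uni}, proved by the dimension count $\dim V^{(\epsilon)}_{\ell+2N}=N+1$). Hence $Q_0^{(\epsilon)}J_\epsilon^2=Q_0^{(\epsilon)}p(\HRabi{\epsilon})$, and one cancels $Q_0^{(\epsilon)}$ on the left by a short lemma showing it has no nontrivial right annihilators in ${\rm Mat}_2(\C[a,a^\dag])$. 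This bootstraps the even-parity statement from the odd-parity one and sidesteps the commutant computation you flagged as a difficulty.
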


The theorem shows that, for $\epsilon \in \frac12 \Z$, the operator $J_{\epsilon}$ is algebraically independent of $\HRabi{\epsilon}$ while
its square $J_{\epsilon}^2$ is a polynomial on $\HRabi{\epsilon}$. We describe the relations of these facts with the expected $\Z_2$-symmetry
below after giving some further properties of the operator $J_{\epsilon}$.

For a fixed $\epsilon \in \frac12 \Z$, it is possible to find further operators $J$ that commute with $\HRabi{\epsilon}$ of the form
$J = \mathcal{P} Q$ with $Q \in {\rm Mat}_2(\C[a,a^{\dag}])$. Actually, these operators turn out to be products of the
operator $J_{\epsilon}$ and polynomials in $\HRabi{\epsilon}$.
%We clarify the situation in the following result.

\begin{thm} \label{thm:uni}
  For fixed $\epsilon \in \frac12 \Z$, set $\ell = 2 |\epsilon|$. Let $Q \in {\rm Mat}_2(\C[a,a^{\dag}])$ be an operator such that
  \[
    [\HRabi{\epsilon}, \mathcal{P}Q] = 0,
  \]
  then $Q = Q_0^{(\epsilon)} p(\HRabi{\epsilon})$ for some polynomial $p \in \C[x]$ and where $Q_0^{(\epsilon)}$ is the operator described in
  Theorem \ref{thm:existence}. In particular, $Q_0^{(\epsilon)}$ is the solution in  ${\rm Mat}_2(\C[a,a^{\dag}])$ with components of
  minimal degree $\ell$.
\end{thm}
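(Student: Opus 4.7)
The natural strategy is induction on the total degree $m = \deg Q$, using $Q_0^{(\epsilon)}$ from Theorem \ref{thm:existence} as the minimal-degree generator and reducing a general solution modulo the family $\{Q_0^{(\epsilon)}\HRabi{\epsilon}^k\}_{k\ge 0}$; these automatically lie in the solution space and have degree exactly $\ell + 2k$, because the leading term of $\HRabi{\epsilon}^k$ is $(a^\dag a)^k I_2$. As a preparation I would reformulate the problem as an intertwiner equation: since conjugation by $\mathcal{P}$ negates $a$ and $a^\dag$ (hence sends $g \mapsto -g$), we have $\mathcal{P}\HRabi{\epsilon}\mathcal{P}^{-1}=\widetilde H$ with $\widetilde H := \HRabi{\epsilon}|_{g\mapsto -g}$, and the commutation condition $[\HRabi{\epsilon},\mathcal{P}Q]=0$ is equivalent to the intertwiner identity $\widetilde H\, Q = Q\,\HRabi{\epsilon}$. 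Writing $\HRabi{\epsilon} = H_e + H_o$ with $H_e = a^\dag a + \Delta\sigma_x + \epsilon\sigma_z$ (Weyl-even) and $H_o = g(a+a^\dag)\sigma_z$ (Weyl-odd), and splitting $Q = Q_e + Q_o$ accordingly, the identity becomes the coupled system
\[
  [H_e, Q_e] = \{H_o, Q_o\}, \qquad [H_e, Q_o] = \{H_o, Q_e\}.
\]

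For the induction step, let $Q$ be a solution of degree $m$. I would first establish that $m \geq \ell$ and $m \equiv \ell \pmod 2$ by inspecting the top-degree contributions of the coupled system above. Setting $m = \ell + 2k$, the key claim is that the degree-$m$ component of $Q$ is a scalar multiple of the degree-$m$ component of $Q_0^{(\epsilon)}\HRabi{\epsilon}^k$. Granted this, I subtract an appropriate multiple $\alpha Q_0^{(\epsilon)}\HRabi{\epsilon}^k$ to strictly reduce the degree, and strong induction yields $Q = Q_0^{(\epsilon)} p(\HRabi{\epsilon})$ for some $p \in \mathbb{C}[x]$. The final ``minimal degree'' assertion then follows immediately: any $Q$ with $\deg Q < \ell$ must equal $Q_0^{(\epsilon)} p(\HRabi{\epsilon})$ with $p \equiv 0$, so $Q = 0$.

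The hard part is the leading-symbol uniqueness at each degree $m = \ell + 2k$. At the base degree $\ell$ this is exactly the uniqueness clause of Theorem \ref{thm:existence}. Propagating it to higher degrees amounts to showing that the degree-$m$ slice of the coupled system has a one-dimensional solution space. The cleanest route I envisage is to establish injectivity of the map $p \mapsto Q_0^{(\epsilon)} p(\HRabi{\epsilon})$ from $\mathbb{C}[x]$ to the solution space (immediate from degree counting), and then match dimensions grade by grade: the degree-$m$ symbol equation should reduce, after stripping a factor corresponding to the scalar leading part $(a^\dag a)^k$ of $\HRabi{\epsilon}^k$, back to the degree-$\ell$ symbol equation governed by Theorem \ref{thm:existence}. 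Careful tracking of the Weyl-parity decomposition $Q = Q_e + Q_o$ and of how the anticommutator with $H_o$ raises the degree by one while the commutator with $H_e$ preserves it will be required throughout this matching.
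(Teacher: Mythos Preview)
Your plan is essentially the same strategy the paper uses: introduce the filtration $V_m^{(\epsilon)}$ of polynomial solutions of degree $\le m$, show that the family $\{Q_0^{(\epsilon)}\HRabi{\epsilon}^j\}_{0\le j\le k}$ is linearly independent in $V_{\ell+2k}^{(\epsilon)}$, and match this against the upper bound $\dim V_{\ell+2k}^{(\epsilon)}\le k+1$ obtained by a leading-term/degree-reduction argument. The organizing device differs: the paper does not use your parity splitting $H=H_e+H_o$, $Q=Q_e+Q_o$, but instead writes out the four scalar recurrences for the coefficients $\alpha_{n,m},\beta_{n,m},\gamma_{n,m},\delta_{n,m}$ directly from $\widetilde H Q=Q\HRabi{\epsilon}$. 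That bookkeeping makes the ``hard part'' you flag completely explicit: at the top degree $N$ the diagonal entries are forced to vanish by $\alpha_{n-1,m}+\alpha_{n,m-1}=0$, and the off-diagonal entries satisfy $(2\epsilon-(n-m))\beta_{n,m}=0$, $(2\epsilon+(n-m))\gamma_{n,m}=0$, pinning each to a single monomial with $n-m=\pm 2\epsilon$. This simultaneously gives $N\ge\ell$, $N\equiv\ell\pmod 2$, and one-dimensionality of the leading symbol, without any need to ``strip $(a^\dag a)^k$'' and reduce to the degree-$\ell$ equation.

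One point your sketch underplays and the paper handles carefully: the single-monomial constraint leaves \emph{two} free scalars at top degree (one in $\beta$, one in $\gamma$), and they are only linked to a single scalar after feeding them into the \emph{next} degree of the diagonal recurrence, which forces $\gamma_{i,N-i}=(-1)^{N}\beta_{N-i,i}$. Your parity system will see this as the coupling $[H_e,Q_e]=\{H_o,Q_o\}$ at one degree below the top; just be sure you actually use it, or the dimension count will be off by one. With that caveat, your argument and the paper's coincide.
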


Finally, for the case $\epsilon \notin \frac12\Z$ we have the following result.

\begin{prop} \label{prop:nonhi}
  If $\epsilon \notin \frac12\Z$, there is no operator $Q \in {\rm Mat}_2(\C[a,a^{\dag}])$ such that
  \[
    [\HRabi{\epsilon}, \mathcal{P}Q] = 0.
  \]
\end{prop}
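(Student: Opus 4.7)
The plan is to convert $[\HRabi{\epsilon},\mathcal{P}Q] = 0$ into an intertwining equation on $Q$ alone and then extract the constraint on $\epsilon$ by a principal-symbol analysis in the Weyl-algebra filtration. Conjugation by $\mathcal{P}$ sends $a\mapsto -a$, $a^{\dag}\mapsto -a^{\dag}$ and is trivial on the spin factor, so setting $\tilde{H} := \mathcal{P}^{-1}\HRabi{\epsilon}\mathcal{P}$ (which is simply $\HRabi{\epsilon}$ with $g$ replaced by $-g$), the commutation is equivalent to $\tilde{H}Q = Q\HRabi{\epsilon}$. Writing $Q = (q_{ij})$ with $q_{ij}\in\C[a,a^{\dag}]$ and expanding the matrix product produces four scalar equations: the diagonal ones involve $[a^{\dag}a,q_{ii}]$, an exchange term $\pm\Delta(q_{12}-q_{21})$, and the anticommutator $g\{a+a^{\dag},q_{ii}\}$, while the off-diagonal ones involve $[a^{\dag}a,q_{ij}]$, an exchange term $\pm\Delta(q_{jj}-q_{ii})$, the commutator $g[a+a^{\dag},q_{ij}]$, and a mass-like term $\pm 2\epsilon\,q_{ij}$. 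Crucially, the parameter $\epsilon$ sits only in the off-diagonal equations.

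Next I would pass to the associated graded algebra of the Weyl algebra under the total-degree filtration and identify principal symbols with homogeneous polynomials in commuting variables $x\leftrightarrow a$, $y\leftrightarrow a^{\dag}$. At the level of symbols, $[a^{\dag}a,\cdot]$ is degree-preserving, realized as the Euler-like operator $y\partial_y - x\partial_x$ with eigenvalue $l-k$ on $x^{k}y^{l}$; $[a+a^{\dag},\cdot]$ is degree-lowering by one; but the anticommutator $\{a+a^{\dag},\cdot\}$ is degree-\emph{raising} by one, with leading symbol $2(x+y)$ times the argument's symbol. Applied to each diagonal equation, the top-degree term is the anticommutator, of degree $\deg q_{ii}+1$, and if $\deg q_{ii}\geq\max(\deg q_{12},\deg q_{21})$ then no other term reaches this degree, so the symbol identity $2g(x+y)\bar q_{ii}=0$ forces $\bar q_{ii}=0$ (since $x+y$ is not a zero-divisor in $\C[x,y]$), contradicting the definition of $\deg q_{ii}$. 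Hence for $q_{ii}\neq 0$ one has $\deg q_{ii}<\max(\deg q_{12},\deg q_{21})$, and if $Q\neq 0$ the top degree $d := \max_{ij}\deg q_{ij}$ is realized only by an off-diagonal entry, strictly exceeding both diagonal degrees.

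Finally, I would extract the principal symbol of the $(1,2)$-equation at degree $d$: the $\Delta$-term is of degree at most $\max(\deg q_{11},\deg q_{22})<d$ and the commutator $g[a+a^{\dag},q_{12}]$ is of degree $d-1$, so both drop out, leaving $(y\partial_y - x\partial_x + 2\epsilon)\,\bar q_{12} = 0$. On each monomial $x^{k}y^{l}$ with $k+l=d$ this reads $(l-k+2\epsilon)\,c_{kl} = 0$; when $2\epsilon\notin\Z$ the factor $l-k+2\epsilon$ is nonzero for every non-negative integer pair $(k,l)$, forcing $\bar q_{12}=0$. The symmetric argument on the $(2,1)$-equation gives $\bar q_{21}=0$, contradicting the maximality of $d$. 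Hence $Q=0$.

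The genuinely delicate point is the leading-order analysis of the diagonal equations: one has to rule out cancellation of the anticommutator symbol by any other term at its top degree. This becomes transparent after passing to $\C[x,y]$, where the symbol map is multiplicative on leading terms and $x+y$ is not a zero-divisor. Once this is in place, the remainder of the argument reduces to weight-counting for the Euler field $y\partial_y - x\partial_x$, whose integer spectrum is exactly what forces $2\epsilon\in\Z$.
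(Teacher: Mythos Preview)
Your proof is correct and follows essentially the same approach as the paper: after passing to the intertwining equation $\widetilde{\HRabi{\epsilon}}Q = Q\HRabi{\epsilon}$, both arguments analyze the top-degree constraints, first showing the diagonal entries have strictly smaller degree (via the anticommutator term) and then reading off the obstruction $(m-n\pm 2\epsilon)\beta_{n,m}=0$ from the off-diagonal equations. The only difference is packaging---you phrase the top-degree analysis via principal symbols in the associated graded $\C[x,y]$ and the Euler field $y\partial_y-x\partial_x$, whereas the paper writes out the explicit coefficient recurrences \eqref{eq:rec11}--\eqref{eq:rec22} directly.
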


In the case $\epsilon \notin \frac12\Z$, a formal solution $Q_{\epsilon}$ as a power series in $a$ and $a^\dag$ can be obtained. Clearly, such a formal solution $Q_{\epsilon}$ is not unique. Proposition \ref{prop:nonhi} is not sufficient to rule out operators $X_\epsilon=\mathcal{P}Q_\epsilon$ in $\mathcal{C}_a(\HRabi{\epsilon})$ as given above for $\epsilon \notin \frac12\Z$. However, we conjecture that all non-trivial formal solutions of $[X,\HRabi{\epsilon}]=0$ are not analytic in $a$ and $a^\dag$ if $g$ and $\Delta$ do not fulfill an additional relation such that the spectral graph as function of $\epsilon$ has (exactly) one intersection at some half-integer value of $\epsilon$.

Now, we proceed to study the expected $\Z_2$-symmetry described by $J_\epsilon$ for a half-integer $\epsilon$.  Let us first define
the algebra $\mathcal{A}'$ as generated by elements of ${\rm Mat}_2(\C[a,a^\dag])$ and $\mathcal{P}$. Furthermore,
$\mathcal{C}_a'(\HRabi{\epsilon})=\mathcal{C}(\HRabi{\epsilon})\cap\mathcal{A}'$.

In this setting, we summarize the results of Theorems \ref{thm:existence} and \ref{thm:uni} as follows.
The set of elements of $\mathcal{A}'$ commuting with $\HRabi{\epsilon}$, algebraically independent of $\HRabi{\epsilon}$ constitute the principal ideal generated by $J_\epsilon$. It then follows that
\[
  \mathcal{C}_a'(\HRabi{\epsilon}) / (J_{\epsilon},\HRabi{\epsilon}) \simeq \R,
\]
or equivalently, $\mathcal{C}_a'(\HRabi{\epsilon}) \simeq \R[J_\epsilon,\HRabi{\epsilon}]$. We notice that in addition, the equation
$J_{\epsilon}^2 = p_{\epsilon}(\HRabi{\epsilon};g,\Delta)$  gives the commutative ring $\mathcal{C}_a'(\HRabi{\epsilon})$ the
algebro-geometric structure
\begin{align}
  \label{eq:hyperelliptic}
  \mathcal{C}_a'(\HRabi{\epsilon}) \simeq \R[x,y]/(y^2 - p_{\epsilon}(x;g,\Delta)).
\end{align}
We also note that the set of solutions $Y$ in $\mathcal{A}'$ (considered as an extension of ${\rm Mat}_2(\C[a,a^\dag])$) of the equation
\[
  Y^2 - p_{\epsilon}(\HRabi{\epsilon};g,\Delta) = 0
\]
is $\{\pm J_\epsilon\}$ and that the group that permutes these solutions is exactly $\Z_2$ with the obvious action.

The $\Z_2$-symmetry may be thus considered as realized by this action and particularly the geometric structure
given by \eqref{eq:hyperelliptic}.
For $\ell \geq3$, the structure corresponds to the ring of functions of a real hyperelliptic curve (i.e. a fibered space over $\R[x]$,
the ring of functions of the projective line $\mathbb{P}^1(\R)$) and for $\ell=1,2$ it is that of a parabola or hyperbola, 
respectively.
A detailed study of the geometric properties is expected to further clarify the symmetry of the AQRM and other models with hidden
symmetry.  We further investigate the polynomials $p_{\epsilon}(x;g,\Delta)$ from the spectral degeneracy point of view in \cite{RBBW2021}. 

Finally, we note that the operator $J_{\epsilon}$ may have a non-trivial kernel $\mathcal{H}_{0}$, this corresponds to the
case that the polynomial $p_{\epsilon}(x;g,\Delta)$ vanishes for some eigenvalues of $\HRabi{\epsilon}$. By definition, the action of $J_{\epsilon}$ on
$\mathcal{H}_{0}$ is given by $J_{\epsilon}|_{\mathcal{H}_{0}}={\bm 0}$, and the commutativity of $J_{\epsilon}$ and $\HRabi{\epsilon}$ is obvious
in $\mathcal{H}_{0}$. Note that, in general, if $\mathcal{V}_{\lambda} \subset \mathcal{H}$ is the eigenspace corresponding to
$\lambda \in \Spec(\HRabi{\epsilon})$, then $J_{\epsilon} \mathcal{V}_{\lambda} \subseteq \mathcal{V}_{\lambda}$. Moreover, the case
$J_{\epsilon} \mathcal{V}_{\lambda} \neq \mathcal{V}_{\lambda}$ only occurs when $J_{\epsilon} \mathcal{V}_{\lambda} = \{0\}$, that is, when
$\mathcal{V}_{\lambda} \subset \mathcal{H}_0$. However, even in the case  $\dim \mathcal{H}_{0} \geq 1$ it holds that
$J_{\epsilon} \mathcal{V}_{\lambda} = \mathcal{V}_{\lambda}$ for almost all eigenvalues $\lambda \in \Spec(\HRabi{\epsilon})$.    
It is reasonable to expect that the $\mathcal{H}_{0} = \{ 0 \}$ for all parameter $g,\Delta>0$, however, it seems difficult to give a proof (or a counter-example) of this (see also Remark \ref{rem:curves}).

\begin{rem} \label{rem:curves}
  We would like to highlight another remarkable property of the polynomials $p_{\frac{\ell}{2}}(x;g,\Delta)$.
  As shown in Figure \ref{fig:Eigencurves}, each polynomial $p_{\frac{\ell}{2}}(x;g,\Delta)$ provides
  an excellent approximation of the first $\ell$ energy levels of $\HRabi{\frac{\ell}2}$ for the case $g/\Delta \gg 1$. 
The nature of this numerically observed relation is yet unclear, but it is another example of a property of the spectrum that holds only for
  certain parameter regimes, in this case, the deep strong coupling regime \cite{Y2017} (for an interesting discussion of
  coupling regimes for the QRM, see \cite{EVBSS2017}).

\begin{figure}[h!]
  \centering
  \subfloat[$\epsilon = \frac12$ and $\Delta=1$]{
    \includegraphics[height=3cm]{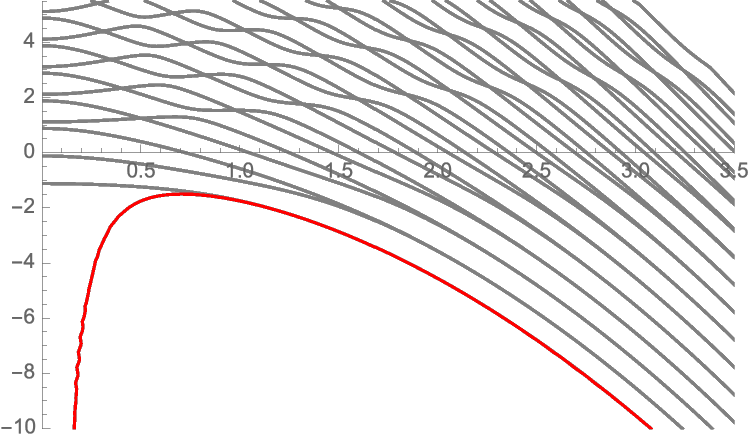}}
  ~
  \subfloat[$\epsilon = 1$ and $\Delta=1$]{
    \includegraphics[height=3cm]{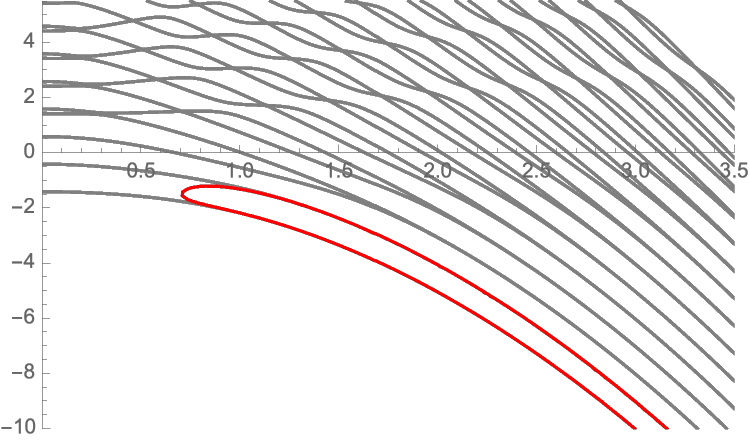}}
  \\
    \subfloat[$\epsilon = \frac32$ and $\Delta=1$]{
    \includegraphics[height=3cm]{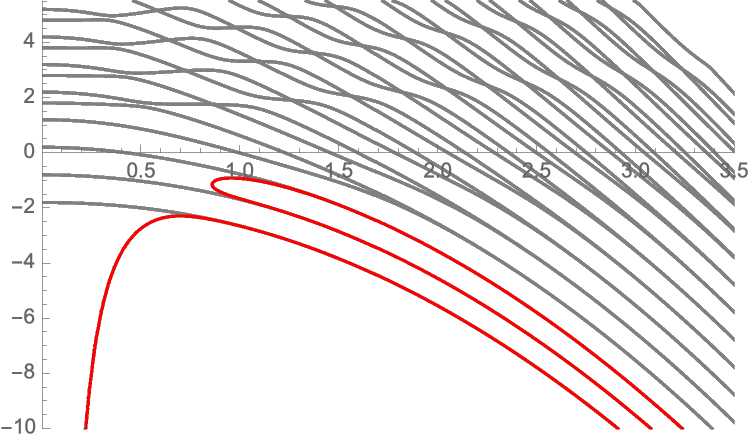}}
  ~
  \subfloat[$\epsilon = 2$ and $\Delta=1$]{
    \includegraphics[height=3cm]{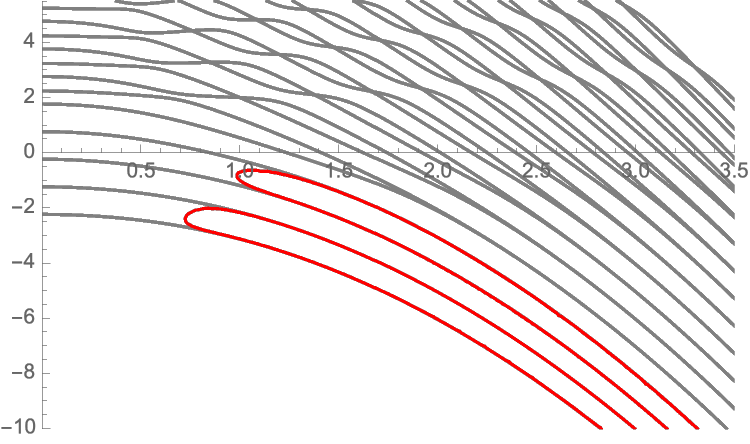}}
  \caption{Spectral curves (grey) and curves defined by $p_{\epsilon}(E,g,\Delta)=0$ (red).}
  \label{fig:Eigencurves}
\end{figure}
\end{rem}

\begin{rem}

  To further describe the hidden symmetry in a geometric way, let us consider the algebraic curve
  (hyperelliptic curves when $\ell\geq3$) described by the equation
  \begin{align} \label{eq:curve2}
    y^2 = p_{\frac{\ell}2}(x;g,\Delta).
  \end{align}
  For illustration, in Figure \ref{fig:hyperelliptic2}, we show the cases of $\ell=0,1,2,3$ for the choice of parameters $g=\Delta=1$ . 
  We note that the case $\ell=3$, after an appropriates change of variable, is given by the elliptic curve
  \begin{equation}%\label{eq:elliptic}
    y^2 =  x^3 - 5184(4g^2 - \Delta^2) x +  186624 \Delta^2.    
  \end{equation}\label{eq:elliptic}

 By the discussion of this section, the tuples of eigenvalues $(\lambda,\mu_{\lambda}) \subset \Spec(\HRabi{\frac{\ell}2})\times \Spec(J_{\frac{\ell}2})$,
 corresponding to a common eigenvector, all lie in the curve \eqref{eq:curve2}.
 The conjecture $\mathcal{H}_{0} = \{ 0 \}$ is then equivalent to the fact that no points $(\lambda,\mu_{\lambda})$ are in the
 intersection of the curve \eqref{eq:curve2} and the line $y=0$ (see Figure \ref{fig:hyperelliptic2}).

  \begin{figure}[h!]
    \centering
    \subfloat[$\ell=0$]{
      \includegraphics[height=3.5cm]{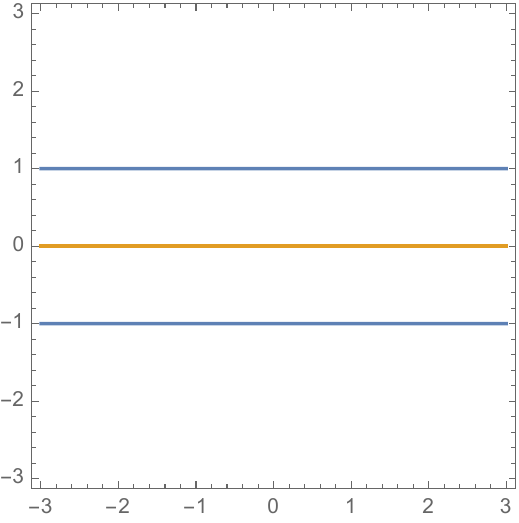}}
    ~ 
    \subfloat[$\ell=1$]{
      \includegraphics[height=3.5cm]{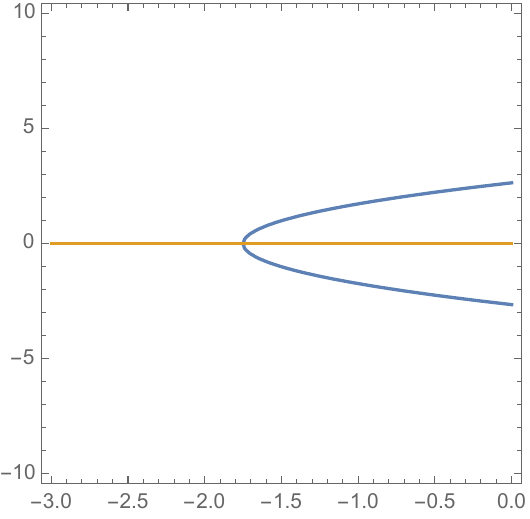}}
    ~
    \subfloat[$\ell=2$]{
      \includegraphics[height=3.5cm]{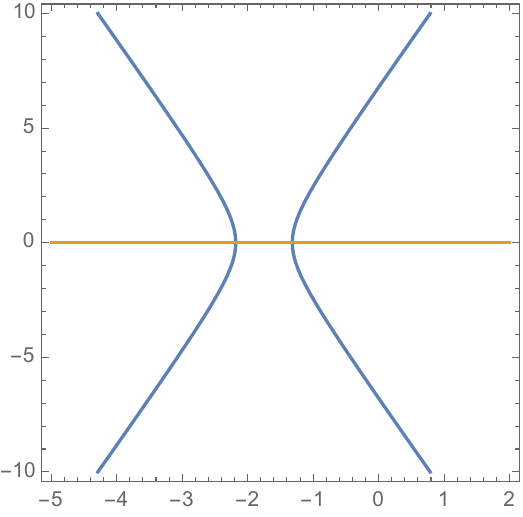}}
    ~ 
    \subfloat[$\ell=3$]{
      \includegraphics[height=3.5cm]{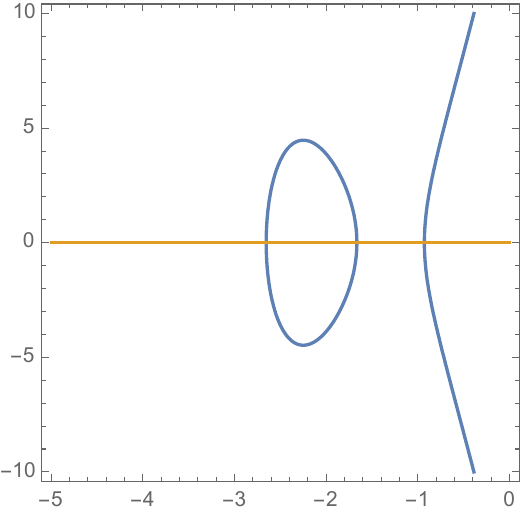}}
    \caption{Curves determined by equation \eqref{eq:curve2} for $\ell=0,1,2,3$ (blue) and the line $y=0$ (orange).}
    \label{fig:hyperelliptic2}
  \end{figure}

  Moreover, notice how in the case $\ell=0$, corresponding to the (symmetric) QRM, the geometric picture gives a clear separation of the eigenvalues into two classes, equivalent to the usual parity decomposition.
  
  For $\ell\geq1$, it is non-trivial to see the distribution of eigenvalues of $\HRabi{\frac{\ell}2}$, that is, whether each eigenvalue is
  either on the upper (positive part) or lower (negative part) part of the curves (see Figure \ref{fig:hyperelliptic2}(b-d)) even
    assuming that $\mathcal{H}_{0} = \{ 0 \}$ for all parameters.
  We leave this discussion to \cite{RBBW2021}. 
\end{rem}

%------------------------------------------------
\section{Proof of the main results}
\label{sec:proofs}
%------------------------------------------------

In this section we give the proofs of Theorems \ref{thm:existence},  \ref{thm:uni} and Proposition \ref{prop:nonhi}.

We start by noticing that 
\begin{equation*}
  \HRabi{\epsilon} \mathcal{P} = \mathcal{P}  \widetilde{\HRabi{\epsilon}},
\end{equation*}
with
\[
  \widetilde{\HRabi{\epsilon}} :=
  \begin{bmatrix}
    a^\dag a - g(a + a^\dag) + \epsilon & \Delta \\
    \Delta & a^\dag a + g (a+ g^\dag) - \epsilon
  \end{bmatrix}.
\]
Therefore, if $J = \mathcal{P} Q$ then $[\HRabi{\epsilon},J]=0$ is equivalent to
\begin{equation}
  \label{eq:algcomm}
  \widetilde{\HRabi{\epsilon}} Q-  Q \HRabi{\epsilon}  = 0.
\end{equation}

The first step is the describe the solutions of \eqref{eq:algcomm} for matrices with entries in $\C[a,a^{\dag}]$.
To do this, we assume that the operator $Q$ is given by
\[
  Q =
  \begin{bmatrix}
    \alpha(a,a^\dag) & \beta(a,a^\dag) \\
    \gamma(a,a^\dag) & \delta(a,a^\dag)
  \end{bmatrix},
\]
where $\alpha(x,y) \in \C[[x,y]]$ (power series on variables $x,y$) is given by
\[
  \alpha(x,y) = \sum_{n=0}^{\infty} \sum^{\infty}_{m=0} \alpha_{n,m} x^n y^m,
\]
with $\alpha_{n,m} \in \C$. Similar definitions are given for the other components. To simplify the discussion
we assume that polynomials in $a$ and $a^{\dag}$ are reduced to this form during the computations.
We note here that the adjoint of the formal power series $\alpha(a,a^{\dag})$ is given by
\begin{equation*} %\label{eq:adjointpoly}
   \alpha(a,a^{\dag})^{\dag} = \sum_{n=0}^{\infty} \sum^{\infty}_{m=0} \overline{\alpha_{m,n}} a^n (a^{\dag})^m.
\end{equation*}

\begin{dfn}
  We say that a solution $Q$ of \eqref{eq:algcomm} is {\em polynomial} if $Q \in {\rm Mat}_2(\C[a,a^{\dag}])$, that is, if
  the power series $\alpha(a,a^\dag), \beta(a,a^\dag),\gamma(a,a^\dag),\delta(a,a^\dag)$ describing the components of $Q$ are actually polynomials. 
  Equivalently, for some $M \in \Z_{\geq0}$, we have  $\alpha_{n,m}= \beta_{n,m} = \gamma_{n,m}=\delta_{n,m}= 0$ for $n+m>M$.
\end{dfn}

\begin{rem}
  An alternative approach is to look for solutions of the form $J= Q \mathcal{P}$. The approach is completely
equivalent and the solutions obtained by this method are just the adjoint of the ones obtained by \eqref{eq:algcomm}.
\end{rem}

\begin{ex}
  Let us write some solutions for small values of half-integer $\epsilon$. For $\epsilon = 0$, a polynomial solution of
  degree $0$ given by
  \[
    Q_0 =
    \begin{bmatrix}
      0 & 1 \\
      1 & 0
    \end{bmatrix} = \sigma_x,
  \]
  and a solution of degree $2$ is given by
  \[
    Q_{0}' =
    \begin{bmatrix}
     \Delta  & a a^{\dag} - g (a+a^{\dag}) \\
     a a^{\dag} + g (a + a^{\dag}) & \Delta
    \end{bmatrix}.
  \]
  Notice that $Q_0'= \sigma_x( \HRabi{0}+ {\bm I}).$
  
  For $\epsilon = \frac{1}{2}$, a polynomial solution of degree $1$ is given by
  \[
    Q_{\frac12} =
    \begin{bmatrix}
      \Delta & 2 g (g-a) \\
      2g (g+a^{\dag}) & \Delta
    \end{bmatrix}.
  \]
  For $\epsilon = 1$, a polynomial solution of degree $2$ is given by
 % {\color{blue}
  \[
    Q_1 =
    \begin{bmatrix}
-2 a g \Delta + 2 a^{\dag} g \Delta + 4 g^{2} \Delta + \Delta & 4 a^{2} g^{2} - 8 a g^{3} + 4 g^{4} + \Delta^{2} \\
4 (a^{\dag})^{2} g^{2} + 8 a^{\dag} g^{3} + 4 g^{4} + \Delta^{2} & -2 a g \Delta + 2 a^{\dag} g \Delta + 4 g^{2} \Delta - \Delta
    \end{bmatrix}.
  \]
%}
   For $\epsilon = \frac{3}2$, a polynomial solution of degree $3$ is given by $Q_{\frac32} = \begin{bsmallmatrix}q_{1,1} & q_{1,2}\\q_{2,1}&q_{2,2}\end{bsmallmatrix}$,
  with
  \begin{align*}
    q_{1,1} &= -2\Delta - 8g^2\Delta - 12g^4\Delta - \Delta^3 - 4g\Delta a^{\dag} - 12 g^3\Delta a^\dag - 4g^2\Delta (a^\dag)^2 + 4 g \Delta a+ 12 g^3 \Delta a + 4 g^2 \Delta a a^\dag
              - 4 g^2 \Delta a^2, \\
    q_{1,2} &=  -8 g^6 - 6 g^2 \Delta^2 - 2 g \Delta^2 a^\dag + 24 g^5 a + 4 g \Delta^2 a - 24 g^4 a^2 + 8 g^3 a^3,\\
    q_{2,1} &= -8 g^6 - 6 g^2 \Delta^2 + 2 g \Delta^2 a - 24 g^5 a^\dag - 4 g \Delta^2 a^\dag - 24 g^4 (a^\dag)^2 - 8 g^3 (a^\dag)^3,\\
    q_{2,2} &= -2 \Delta + 4 g^2 \Delta  - 12 g^4 \Delta - \Delta^3 + 4 g \Delta a^{\dag} - 12 g^3 \Delta a^\dag - 4 g^2 \Delta (a^\dag)^2 - 4 g \Delta a + 12 g^3\Delta a + 4 g^2 \Delta a a^\dag
              - 4 g^2\Delta a^2.
  \end{align*}
\end{ex}
  
\begin{ex} \label{ex:J2}
  
  Let $J_{i} = \mathcal{P}Q_i$, for $i=0,\frac12,1,\frac32$ as in the examples above. We have
  \begin{align*}
    J_{0}^2 &= {\bm I},  \\ 
    J_{\frac12}^2 &=  4g^2 \HRabi{\frac12} + (4g^4+2g^2+\Delta^2) {\bm I}, \\
    J_{1}^2 &=  16 g^4 (\HRabi{1})^2+ 8g^2(4 g^4+2 g^2+\Delta^2)\HRabi{1} + (16 g^{8} + 16 g^{6} + 8 g^{4} \Delta^{2} + 4 g^{2} \Delta^{2} + \Delta^{4} + \Delta^{2}) {\bm I}, \\
    J_{\frac32}^2 &= 64 g^6(\HRabi{\frac32})^3 + 48g^4(4g^4+2g^2+\Delta^2)(\HRabi{\frac32})^2+ 4g ^2(48 g^8+48 g^6+24 g^4 \Delta^2-16 g^4+12 g^2 \Delta^2+3 \Delta^4+4 \Delta^2) \HRabi{\frac32}\\
            &+(64g^{12}+96g^{10}+48g^8\Delta^2-16g^8+48g^6\Delta^2-24 g^6 +12g^4\Delta^4 +12g^4\Delta^2+6g^2\Delta^4+\Delta^6+8g^2\Delta^2+4\Delta^4+4\Delta^2){\bm I}.
  \end{align*}

\end{ex}

With these preparations, let us proceed to prove Theorems \ref{thm:existence} and \ref{thm:uni}. We do this by
proving the individual statements separately. 

\begin{prop} \label{prop:existencePoly}
  Let $\epsilon = \frac12 \Z$ and set $\ell = 2|\epsilon|$. Then there exists a polynomial solution $Q_0^{(\epsilon)}$ of degree $\ell$ of \eqref{eq:algcomm}.
  Moreover, up to multiplication by constants, $Q_0^{(\epsilon)}$ is the unique solution of degree $\ell$. In addition, there are no polynomial
  solutions of \eqref{eq:algcomm} of degree smaller than $\ell$.
\end{prop}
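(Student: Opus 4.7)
Writing $Q = \begin{bmatrix}\alpha & \beta \\ \gamma & \delta\end{bmatrix}$ with $\alpha,\beta,\gamma,\delta\in\C[a,a^\dag]$ and expanding the matrix identity \eqref{eq:algcomm} componentwise, one obtains the coupled system
\begin{align*}
[a^\dag a,\alpha] - g\{a+a^\dag,\alpha\} + \Delta(\gamma-\beta) &= 0, \\
[a^\dag a,\delta] + g\{a+a^\dag,\delta\} + \Delta(\beta-\gamma) &= 0, \\
[a^\dag a,\beta] - g[a+a^\dag,\beta] + 2\epsilon\beta + \Delta(\delta-\alpha) &= 0, \\
[a^\dag a,\gamma] + g[a+a^\dag,\gamma] - 2\epsilon\gamma + \Delta(\alpha-\delta) &= 0.
\end{align*}
The structural facts driving the analysis are that $[a^\dag a,\cdot]$ acts diagonally on the normal-ordered monomial basis with eigenvalue $n-m$ on $(a^\dag)^n a^m$; the commutator $[a+a^\dag,\cdot]$ strictly lowers total degree by one; and the anticommutator $\{a+a^\dag,\cdot\}$ shifts total degree by $\pm 1$.

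My plan is descending induction on the total degree. Decompose each component into homogeneous pieces $\alpha=\sum_d\alpha^{(d)}$, etc., and suppose $\deg Q\le\ell$. At degree $\ell+1$ the only contribution to the first equation comes from $\{a+a^\dag,\alpha^{(\ell)}\}$; matching coefficients of monomials $(a^\dag)^p a^q$ with $p+q=\ell+1$ yields a triangular recursion with vanishing boundary conditions, forcing $\alpha^{(\ell)}=0$, and likewise $\delta^{(\ell)}=0$ from the second equation. At degree $\ell$ the third equation then reduces to $[a^\dag a,\beta^{(\ell)}]+2\epsilon\beta^{(\ell)}=0$, which admits a nonzero homogeneous polynomial solution precisely when $2\epsilon\in\Z$ with $\ell=2|\epsilon|$, in which case $\beta^{(\ell)}=c_1 a^\ell$ (assuming $\epsilon\ge 0$; the opposite sign is symmetric). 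Similarly $\gamma^{(\ell)}=c_2(a^\dag)^\ell$. Feeding these into the degree-$\ell$ part of the first equation yields a bidiagonal linear system on the coefficients of $\alpha^{(\ell-1)}$ whose two boundary conditions telescope into the compatibility relation $c_2=(-1)^\ell c_1$, fixing the top-degree ansatz up to one scalar.

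The descent then proceeds as follows: at each step $d<\ell$, the off-diagonal equations determine $\beta^{(d)},\gamma^{(d)}$ uniquely from the already-known higher-degree data, because the operator $[a^\dag a,\cdot]\pm 2\epsilon$ is invertible on degree-$d$ homogeneous polynomials (no such monomial carries charge $\mp\ell$ when $d<\ell$). Substituting these into the diagonal equations at degree $d$ produces linear equations that determine $\alpha^{(d-1)}$ and $\delta^{(d-1)}$; however the map $\alpha^{(d-1)}\mapsto\{a+a^\dag,\alpha^{(d-1)}\}\big|_d$ sends a $d$-dimensional space into a $(d+1)$-dimensional one, so its image has codimension one and solvability imposes at each descending step a scalar compatibility condition on the already-determined data. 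The hardest part of the proof is verifying that all these compatibility conditions are automatically satisfied and that the induction closes consistently at $d=0$; this is expected to follow from the specific combinatorial structure that arises exactly when $\epsilon=\ell/2$, which I would establish either by a direct bookkeeping argument on the recursion coefficients or by exploiting a hidden algebraic symmetry between the upper and lower triangular halves of the descent.

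Uniqueness and non-existence of lower-degree solutions both follow from the same top-degree mechanism. If $Q$ is a polynomial solution of degree $d<\ell$, then $[a^\dag a,\beta^{(d)}]+2\epsilon\beta^{(d)}=0$ has no nonzero solution, since no degree-$d$ monomial has charge $-\ell$; hence $\beta=\gamma=0$, and the remaining equations force $\alpha=\delta$ together with $\{a+a^\dag,\alpha\}=0$, whose only polynomial solution is $\alpha\equiv 0$. Uniqueness at degree $\ell$ is immediate, because the only free parameter is the leading constant $c_1$, and all lower-degree coefficients are then determined by the descent.
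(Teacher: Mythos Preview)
Your approach is essentially the same as the paper's: both reduce \eqref{eq:algcomm} to a coupled system of recurrences on the monomial coefficients, use the grading by $[a^\dag a,\cdot]$ to analyze the top degree (obtaining $\alpha^{(\ell)}=\delta^{(\ell)}=0$, then $\beta^{(\ell)}\propto a^\ell$, $\gamma^{(\ell)}\propto (a^\dag)^\ell$, with the sign relation $c_2=(-1)^\ell c_1$ forced by the first diagonal equation), and then descend degree by degree. Your non-existence and uniqueness arguments also match the paper's exactly.

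The one point worth flagging is the compatibility issue you highlight in the descent. You are right that at each level $d$ the map $\alpha^{(d-1)}\mapsto\{a+a^\dag,\alpha^{(d-1)}\}\big|_d$ has one-dimensional cokernel, so there is in principle a scalar obstruction at every step. You leave this open, proposing either direct bookkeeping or a symmetry argument. The paper does not do better here: after handling the top level explicitly it simply asserts ``continuing this process in the same manner, using the recurrence relations above, the remaining coefficients can be computed to obtain a polynomial solution $Q$,'' without verifying the intermediate compatibilities. So your proposal is as complete as the paper's own proof, and more candid about where the actual work remains. If you want to close the gap fully, the symmetry route is the more promising one: the relations $\alpha_{n,m}=(-1)^{n+m}\alpha_{m,n}$, $\delta_{n,m}=(-1)^{n+m}\delta_{m,n}$, $\beta_{n,m}=(-1)^{n+m}\gamma_{m,n}$ (established later in the paper for self-adjointness) are exactly the constraints that make the alternating sum of the $d{+}1$ equations at level $d$ vanish automatically.
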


\begin{proof}
 The condition \eqref{eq:algcomm} is equivalent to
  \[
    2\epsilon \begin{bmatrix}
      0 & \beta \\
      -\gamma & 0
    \end{bmatrix}
    +
    \begin{bmatrix}
      [a^\dag a,\alpha] & [a^\dag a,\beta] \\
      [a^\dag a,\gamma] & [a^\dag a,\delta]
    \end{bmatrix}
    +g
    \begin{bmatrix}
      -(a+a^\dag )\alpha - \alpha(a+a^\dag)  &  - [a+a^\dag,\beta]  \\
      [a+a^\dag,\gamma]  & (a+a^\dag)\delta +\delta (a+a^\dag)
   \end{bmatrix}
       +
    \Delta
    \begin{bmatrix}
      \gamma - \beta & \delta-\alpha \\
      \alpha-\delta & \beta-\gamma
    \end{bmatrix}=0.
  \]

  The equality defines a set of simultaneous recurrence relation in terms of the coefficients of the polynomials
  $\alpha,\beta,\gamma,\delta$. For $n,m \geq 0$ the general term of then recurrence relations are given as follows
\begin{equation}
  \label{eq:rec11}
  (m-n) \alpha_{n,m} - 2g \alpha_{n-1,m} -2g \alpha_{n,m-1} + (m+1) g \alpha_{n,m+1} + (n+1)g \alpha_{n+1,m} + \Delta(\gamma_{n,m}-\beta_{n,m}) =0,
\end{equation}
\begin{equation}
  \label{eq:rec12}
  (2\epsilon - (n-m)) \beta_{n,m} - (m+1) g \beta_{n,m+1} + (n+1)g \beta_{n+1,m} + \Delta (\delta_{n,m} - \alpha_{n,m}) = 0,
\end{equation}
\begin{equation}
  \label{eq:rec21}
    (-2\epsilon - (n-m)) \gamma_{n,m} + (m+1) g \gamma_{n,m+1} - (n+1)g \gamma_{n+1,m} + \Delta ( \alpha_{n,m}-\delta_{n,m} ) = 0,
\end{equation}
\begin{equation}
  \label{eq:rec22}
    (m-n) \delta_{n,m} + 2g \delta_{n-1,m} +2g \delta_{n,m-1} - (m+1) g \delta_{n,m+1} - (n+1)g \delta_{n+1,m} + \Delta(\beta_{n,m}-\gamma_{n,m}) =0,
\end{equation}
with initial conditions $\alpha_{n,m}=0$ for $n,m<0$ and similar for the other coefficients.  Consider a fixed integer $N\geq 0$.
The condition that $Q$ is a polynomial solution of degree $N$ imposes the additional condition $\alpha_{n,m}=0$ for $n+m>N$.

First, by considering \eqref{eq:rec11} and \eqref{eq:rec22} for $n+m= N+1$ we see that
\[
  \alpha_{n-1,m} +\alpha_{n,m-1}=0, \qquad  \delta_{n-1,m} +\delta_{n,m-1}=0
\]
for $n= 0,1,\cdots,N$. In particular, $\alpha_{0,N}= \delta_{0,N} = 0$ and it follows that $\alpha_{n,m}=0$ (resp. $\delta_{n,m}=0$ ) for $n+m=N$.

Next, we consider \eqref{eq:rec12} and \eqref{eq:rec21} for  $n+m=N$. The recurrence relations for this case reduce to
\[
  (2\epsilon - (n-m)) \beta_{n,m} = 0, \qquad (-2\epsilon - (n-m)) \gamma_{n,m}   = 0.
\]
Note that if $\epsilon \neq \pm \frac{n-m}{2}$ for some $n,m\geq0$  with $n+m=N$, $\beta_{n,m} = \gamma_{n,m} = 0$ for all $n,m\geq0$  with $n+m=N$, and
we are reduced to the case of polynomial solutions of degree $N-1$. In particular, if $\epsilon \notin \frac12 \Z$, then the only polynomial
solution (of any degree) is the trivial solution $Q=0$. Therefore, for a polynomial solution of degree $N$ we must have $\epsilon  = \frac{N-2i}2$ with $i=0,\ldots,N$.

Suppose that $\epsilon  = \frac{N-2i}2$, then for $i=0,\cdots,N$ the values $\beta_{N-i,i}$ and $\gamma_{i,N-i}$ are arbitrary, and $\beta_{n,m}=\gamma_{m,n}=0$
for $n\neq N-i$ and $m\neq i$ with $n+m=N$. The coefficients for the cases $n,m\geq 0$ with $n+m<N$ can then be computed then according to the
recurrence relations above. We note in particular, that for the case $n+m = N$, the recurrence relations \eqref{eq:rec11} and \eqref{eq:rec22} give
\[
  \alpha_{n-1,m} + \alpha_{n,m-1} =  \frac{\Delta}{2g}(\gamma_{n,m}-\beta_{n,m}),
\]
and a similar one for the case of $\delta_{n,m}$. By a similar argument than the one used for the case of $n+m=N+1$, we see that
all the values of $\alpha_{n,m}$ for $n,m\geq$ with $n+m=N-1$ can be computed and furthermore, we see that
\[
  \gamma_{i,N-i} = (-1)^{N-2i} \beta_{N-i,i}
\]
and therefore, up to a constant factor, the leading coefficients of the polynomials are determined uniquely.

Note that for $n+m<\ell$, the conditions of \eqref{eq:rec12} (resp. \eqref{eq:rec21}) are
\[
  \beta_{n,m} = \frac{1}{2\epsilon - (n-m)}\left((m+1) g \beta_{n,m+1} - (n+1)g \beta_{n+1,m} - \Delta (\delta_{n,m} - \alpha_{n,m}) \right)
\]
for the case $\epsilon \neq \frac{n-m}{2}$. In the case $\epsilon = \frac{n-m}{2}$ the coefficient $\beta_{n,m}$ is free and can take any arbitrary value.
The same situation holds for $\gamma_{n,m}$. Continuing this process in the same manner, using the recurrence relations above, the remaining coefficients can be computed to obtain a polynomial solution $Q$. This proves the existence of polynomial solutions for any
$\epsilon \in \frac12 \Z$. Moreover, the polynomial solutions are of degree $\ell+2k$ for $k\geq 0$, and in the case of degree
$\ell$ there are no degrees of freedom on the coefficients with exception of the leading term of the polynomial. This argument completes the proof.

\end{proof}

By the proof of Proposition \ref{prop:existencePoly}, for $\epsilon \notin \frac12 \Z$, there cannot be a polynomial solution of
\eqref{eq:algcomm} of any finite degree, thus Proposition \ref{prop:nonhi} is proved.
Also, notice from the recurrence relations on the proof of Proposition \eqref{prop:existencePoly} that by taking
the leading coefficient to be a integer, all the coefficients of the polynomials are in $\Q[g,\Delta]$.

\begin{prop}
  For $\epsilon \in \frac12 \Z$ with $\ell = 2|\epsilon|$, the operator $Q_0^{(\epsilon)}$ of degree $\ell$ is not a polynomial in $\HRabi{\epsilon}$.
\end{prop}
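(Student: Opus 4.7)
The plan is to argue by contradiction. Suppose $Q_0^{(\epsilon)} = p(\HRabi{\epsilon})$ for some nonzero $p \in \C[x]$. Since $[\HRabi{\epsilon}, \mathcal{P} Q_0^{(\epsilon)}]=0$ is equivalent by \eqref{eq:algcomm} to $\widetilde{\HRabi{\epsilon}} Q_0^{(\epsilon)} = Q_0^{(\epsilon)} \HRabi{\epsilon}$, and since $p(\HRabi{\epsilon})$ commutes with $\HRabi{\epsilon}$, substituting yields
\[
(\widetilde{\HRabi{\epsilon}} - \HRabi{\epsilon})\, p(\HRabi{\epsilon}) = 0.
\]
A direct computation from the explicit matrix forms of $\HRabi{\epsilon}$ and $\widetilde{\HRabi{\epsilon}}$ gives $\widetilde{\HRabi{\epsilon}} - \HRabi{\epsilon} = -2g(a+a^\dag)\sigma_z$, so the previous display reduces, using $g\neq 0$, to the identity
\[
(a+a^\dag)\sigma_z\, p(\HRabi{\epsilon}) = 0 \quad \text{in } {\rm Mat}_2(\C[a,a^\dag]).
\]

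The next step is to contradict this identity by a leading-term argument. I would equip $\C[a, a^\dag]$ with the standard total-degree filtration ($\deg a = \deg a^\dag = 1$) and ${\rm Mat}_2(\C[a,a^\dag])$ with the entrywise degree. The only degree-$2$ contribution to $\HRabi{\epsilon}$ is the scalar $a^\dag a \cdot {\bm I}$, while all remaining entries have degree at most $1$. Consequently, the top-degree part of $\HRabi{\epsilon}^k$ is $(a^\dag a)^k\, {\bm I}$, nonzero of degree $2k$. For $d = \deg p$ with leading coefficient $c_d \neq 0$, the top-degree part of $p(\HRabi{\epsilon})$ is therefore $c_d (a^\dag a)^d\, {\bm I}$ of degree $2d$, which cannot be cancelled since the lower powers $\HRabi{\epsilon}^j$ for $j<d$ contribute only to strictly lower degrees. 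Left-multiplication by $(a+a^\dag)\sigma_z$ then yields the top-degree contribution $c_d (a+a^\dag)(a^\dag a)^d \sigma_z$ of degree $2d+1$, which is nonzero because the associated graded of the Weyl algebra is the polynomial ring $\C[\bar{a},\bar{a^\dag}]$, an integral domain. This contradicts the displayed identity.

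I do not expect a serious obstacle once the conjugation identity $\widetilde{\HRabi{\epsilon}} - \HRabi{\epsilon} = -2g(a+a^\dag)\sigma_z$ is recorded; the only delicate point is verifying that no cancellation can occur at the top degree, and this is immediate from the fact that the graded ring associated with the total-degree filtration on $\C[a,a^\dag]$ is a commutative polynomial domain, together with the observation that the leading symbol of $\HRabi{\epsilon}$ is the scalar $a^\dag a\,{\bm I}$.
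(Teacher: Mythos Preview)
Your proof is correct and follows essentially the same route as the paper: assume $Q_0^{(\epsilon)}=p(\HRabi{\epsilon})$, reduce \eqref{eq:algcomm} to $(\widetilde{\HRabi{\epsilon}}-\HRabi{\epsilon})p(\HRabi{\epsilon})=0$, compute the difference as $-2g(a+a^\dag)\sigma_z$, and observe that left multiplication by $a+a^\dag$ cannot annihilate a nonzero polynomial in the Weyl algebra. Your use of the total-degree filtration and the associated graded $\C[\bar a,\bar{a^\dag}]$ to certify the non-vanishing at top degree is simply a more explicit formulation of the paper's terse remark that ``the monomials of $2g(a+a^\dag)$ have the same degree and coefficients,'' so there is no substantive difference.
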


\begin{proof}
  Suppose that $Q_0^{(\epsilon)}= p(\HRabi{\epsilon})$ for $p \in \C[x]$. The relation \eqref{eq:algcomm} reduces to
  \[
    (\widetilde{\HRabi{\epsilon}}- \HRabi{\epsilon})p(\HRabi{\epsilon})= 0.
  \]
  Note that $\widetilde{\HRabi{\epsilon}}- \HRabi{\epsilon} = - 2g (a+a^{\dag}) \sigma_z$, thus, since $g \neq 0$ then
  $2g(a+a^{\dag})$ must be an annihilator of each of the components of $p(\HRabi{\epsilon})$, which is impossible
  since the monomials of $2g(a+a^{\dag})$ have the same degree and coefficients and $p(x)$ is a polynomial. Therefore,
  we must have $p \equiv 0$, contradicting the fact that $Q_0^{(\epsilon)}$ is a solution of degree $\ell$.
\end{proof}

Next, we set $J_{\epsilon} := \mathcal{P} Q_0^{(\epsilon)}$. By the discussion above, $J_{\epsilon}$ satisfies
\[
  [\HRabi{\epsilon},J_{\epsilon}] = 0.
\]

\begin{prop}
  For $\epsilon \in \frac12 \Z$, the operator $J_{\epsilon}$ is self-adjoint.
\end{prop}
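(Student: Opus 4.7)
The plan is to use the uniqueness clause of Proposition \ref{prop:existencePoly} to reduce self-adjointness to a computation of a single scalar, which I expect to turn out to be $+1$ by inspection of leading coefficients. The starting observation is that $\mathcal{P}=\exp(i\pi a^\dag a)$ is a self-adjoint unitary involution satisfying $\mathcal{P}a\mathcal{P}=-a$ and $\mathcal{P}a^\dag\mathcal{P}=-a^\dag$. Consequently, conjugation by $\mathcal{P}$ induces on ${\rm Mat}_2(\C[a,a^\dag])$ a degree-preserving involutive automorphism that substitutes $a\mapsto-a$, $a^\dag\mapsto-a^\dag$ in every entry; denote it $Q\mapsto Q^{\sharp}$. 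Using $\mathcal{P}^{\dag}=\mathcal{P}$, the adjoint of $J_\epsilon$ can be rewritten as
\[
   J_\epsilon^{\dag}=(Q_0^{(\epsilon)})^{\dag}\mathcal{P}=\mathcal{P}\cdot\bigl((Q_0^{(\epsilon)})^{\dag}\bigr)^{\sharp},
\]
which exhibits $J_\epsilon^{\dag}$ in the canonical form $\mathcal{P}Q'$ with $Q':=((Q_0^{(\epsilon)})^{\dag})^{\sharp}\in{\rm Mat}_2(\C[a,a^\dag])$, a polynomial matrix of the same degree $\ell$ as $Q_0^{(\epsilon)}$.

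Next, since $\HRabi\epsilon$ is self-adjoint, taking adjoints of $[\HRabi\epsilon,J_\epsilon]=0$ yields $[\HRabi\epsilon,J_\epsilon^{\dag}]=0$. Via the equivalence $\HRabi\epsilon\mathcal{P}=\mathcal{P}\widetilde{\HRabi\epsilon}$ used at the beginning of the section, this is exactly the statement that $Q'$ satisfies equation \eqref{eq:algcomm}. Because $Q'$ is a polynomial solution of degree $\ell$, the uniqueness clause of Proposition \ref{prop:existencePoly} produces a scalar $c\in\C$ with
\[
   \bigl((Q_0^{(\epsilon)})^{\dag}\bigr)^{\sharp}=c\,Q_0^{(\epsilon)}.
\]
Applying $\sharp$ and then $\dag$ a second time (the two operations commute on polynomial matrices) turns the right-hand side into $|c|^2 Q_0^{(\epsilon)}$, forcing $|c|=1$.

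The only delicate step -- which I expect to be the main obstacle -- is to pin the phase down to $c=1$. I would do this by directly comparing the leading-degree coefficient of a single off-diagonal entry on both sides of the identity $((Q_0^{(\epsilon)})^{\dag})^{\sharp}=c\,Q_0^{(\epsilon)}$. By the recursion in the proof of Proposition \ref{prop:existencePoly}, the top coefficients of the off-diagonal entries of $Q_0^{(\epsilon)}$ obey $\gamma_{i,N-i}=(-1)^{N-2i}\beta_{N-i,i}$ with $N=\ell$. Tracking the signs: the operation $\dag$ swaps $(a^\dag)^\ell\leftrightarrow a^\ell$ while $\sharp$ contributes a factor $(-1)^\ell$, and these two sign contributions combine with the factor $(-1)^{N-2i}$ from the recursion to yield exactly $+1$. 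In other words, the top coefficient of the $(1,2)$-entry of $((Q_0^{(\epsilon)})^{\dag})^{\sharp}$ coincides with that of $Q_0^{(\epsilon)}$ itself, which forces $c=1$ and hence $J_\epsilon^{\dag}=J_\epsilon$.
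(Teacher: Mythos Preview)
Your argument is correct and is a genuinely different, more conceptual route than the paper's. The paper proves self-adjointness by a direct induction over degrees, verifying the coefficient symmetries $\alpha_{n,m}=(-1)^{n+m}\alpha_{m,n}$, $\delta_{n,m}=(-1)^{n+m}\delta_{m,n}$, $\beta_{n,m}=(-1)^{n+m}\gamma_{m,n}$ one level at a time from the recurrences \eqref{eq:rec11}--\eqref{eq:rec22}. You instead observe that $J_\epsilon^{\dag}=\mathcal{P}\,((Q_0^{(\epsilon)})^{\dag})^{\sharp}$ again commutes with $\HRabi\epsilon$, hence $((Q_0^{(\epsilon)})^{\dag})^{\sharp}$ is another degree-$\ell$ solution of \eqref{eq:algcomm}, and then invoke the uniqueness clause of Proposition~\ref{prop:existencePoly} to reduce everything to a single scalar $c$. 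The involutivity of $Q\mapsto (Q^{\dag})^{\sharp}$ forces $|c|=1$, and the leading-term comparison pins down the phase. This bypasses the inductive bookkeeping entirely and makes transparent \emph{why} self-adjointness holds: it is a consequence of uniqueness, not of any delicate cancellation in the recurrences.

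One small point worth making explicit: your final step shows that the $(1,2)$ leading coefficient of $((Q_0^{(\epsilon)})^{\dag})^{\sharp}$ equals $\overline{\beta_{\ell,0}}$, while that of $Q_0^{(\epsilon)}$ is $\beta_{\ell,0}$; so strictly speaking you obtain $c=\overline{\beta_{\ell,0}}/\beta_{\ell,0}$, which equals $1$ precisely when the leading coefficient is chosen real. This is harmless---$Q_0^{(\epsilon)}$ is only defined up to a scalar, and one may (and the paper does) normalize $\beta_{\ell,0}\in\R$---but it should be stated, since for a non-real normalization $J_\epsilon$ would merely satisfy $J_\epsilon^{\dag}=\bar{c}\,J_\epsilon$ rather than $J_\epsilon^{\dag}=J_\epsilon$. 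The paper's proof handles this same issue in passing with the remark that ``the equation forces the common constant to be a real number.''
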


\begin{proof}
  The statement is equivalent to the following polynomial identities
  \begin{align*}
    \alpha(a,a^{\dag}) &= \alpha(-a,-a^{\dag})^{\dag} \\
    \delta(a,a^{\dag}) &= \delta(-a,-a^{\dag})^{\dag} \\
    \beta(a,a^{\dag}) &= \gamma(-a,-a^{\dag}),
  \end{align*}
  which, in turn, results in the equalities of coefficients
  \begin{align*}
    \alpha_{n,m} &= (-1)^{n+m} \alpha_{m,n}\\
    \delta_{n,m} &= (-1)^{n+m} \delta_{m,n}\\
    \beta_{n,m} &= (-1)^{n+m} \gamma_{m,n}.
  \end{align*}
  In the above equations we have assumed the coefficients to be real numbers and thus we
  have omitted the complex conjugate. This does not represent a loss in generality, since in the general case, we have,
  for example
  \[
    \alpha_{n,m} = (-1)^{n+m} \overline{\alpha_{m,n}},
  \]
  however, since the coefficients $\alpha_{n,m}$ and $\alpha_{m,n}$ are given by a real number multiplied by a common constant, the equation
  forces the common constant to be a real number.

  Next, we set $\ell = 2|\epsilon|$ and note that by the proof of Proposition \ref{prop:existencePoly} the equalities holds for $n,m \geq 0$
  with $n+m \ge \ell$ (which are only non-vanishing for the case $n+m=\ell$. Now, let us take $N \geq 0$ and suppose the result holds
  for $n,m \geq 0$ with $n+m \geq N$,
  then, by the recurrence relation \eqref{eq:rec12} that for $n,m\geq0$ with $n+m=N-1$ we have
  \begin{align*}
    (2\epsilon - (m-n)) \beta_{m,n} &= (n+1)g \beta_{m,n+1} - (m+1)g \beta_{m+1,n} - \Delta (\delta_{m,n} - \alpha_{m,n}) \\
                         &= (-1)^{N-1}\left( - (n+1)g \gamma_{n+1,m} + (m+1)g \gamma_{n,m+1}  + \Delta (\alpha_{n,m} - \delta_{n,m}) \right)\\
                         &= (-1)^{N-1} (2\epsilon+(n-m))\gamma_{n,m},
  \end{align*}
  where the last equality holds by \eqref{eq:rec21}. Therefore $\beta_{m,n} =  (-1)^{N-1}  \gamma_{n,m}$, as desired.

  Let us now consider the case of $\alpha_{n,m}$. We omit the proof for $\delta_{n,m}$ since it is completely analogous. As usual, first, we
  consider the extremal cases $n=0$ or $m=0$. By the recurrence relations \eqref{eq:rec11} and \eqref{eq:rec22}, we have
  \begin{align*}
    2g \alpha_{0,N-1}   &= N \alpha_{0,N}  + (N+1) g \alpha_{0,N+1} + g \alpha_{1,N} + \Delta(\gamma_{0,N} - \beta_{0,N}) \\
                   &= (-1)^{N-1} \left( - N \alpha_{N,0} + (N+1)g \alpha_{N+1,0} + g \alpha_{N,1} + \Delta(\gamma_{N,0} - \beta_{N,0}) \right) \\
                   &= (-1)^{N-1} (2g \alpha_{N-1,0}),
  \end{align*}
  giving the desired equality. The remaining cases of $n+m = N-1$ are dealt in the same way. 
\end{proof}

By the considerations given in the proof of the proposition above, from this point we assume that the constant in $Q_{0}^{(\epsilon)}$
is chosen so that it has real coefficients.

Next, we deal with the issue of the uniqueness of the solution $ Q_0^{(\epsilon)}$ given above. It is clear that the addition of two
solutions of \eqref{eq:algcomm} is another solution and that the same is true for multiplication by real constants.

Denote by $V_n^{(\epsilon)}$ the real vector space of polynomial solutions of \eqref{eq:algcomm}  of degree smaller or equal to $n$. Clearly, for
$\epsilon \in \frac12 \Z$ and $\ell= 2|\epsilon|$, by the proposition and corollary we see that 
\begin{align*}
  \dim V_{\ell}^{(\epsilon)} &= 1, \\
  \dim V_{n}^{(\epsilon)} &= 0
\end{align*}
for $n < \ell$.

\begin{prop}
  Let  $\epsilon \in \frac12 \Z$ and set $\ell = 2|\epsilon|$. We have
  \[
    \dim V_{n+1}^{(\epsilon)} = \dim V_{n}^{(\epsilon)},
  \]
  for $n = \ell+2 k$ for some $k\geq 0$, and
  \[
    \dim V_{m+1}^{(\epsilon)} = \dim V_{m}^{(\epsilon)}+1
  \]
  for $m= \ell+2 k +1$ for some $k\geq 0$.
\end{prop}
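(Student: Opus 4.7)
The plan is to establish matching lower and upper bounds for $\dim V_n^{(\epsilon)}$ by combining the constructive use of $Q_0^{(\epsilon)}$ with the leading-coefficient analysis already employed in the proof of Proposition \ref{prop:existencePoly}.

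For the lower bound I would exhibit an explicit family of independent solutions. Starting from the identity $\widetilde{\HRabi{\epsilon}} Q_0^{(\epsilon)} = Q_0^{(\epsilon)} \HRabi{\epsilon}$, right multiplication by $\HRabi{\epsilon}^{j}$ immediately yields a polynomial solution $Q_0^{(\epsilon)} \HRabi{\epsilon}^{j}$ of \eqref{eq:algcomm}. Since $\HRabi{\epsilon}$ contains the degree-$2$ term $a^{\dag} a$ and the remaining lower-degree terms cannot cancel the leading entries of $Q_0^{(\epsilon)}$ (this will be checked using the explicit form of the leading coefficients derived in Proposition \ref{prop:existencePoly}), the operator $Q_0^{(\epsilon)} \HRabi{\epsilon}^{j}$ has degree exactly $\ell + 2j$. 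Consequently the $k+1$ operators $Q_0^{(\epsilon)}, Q_0^{(\epsilon)}\HRabi{\epsilon}, \dots, Q_0^{(\epsilon)}\HRabi{\epsilon}^{k}$ sit in $V_{\ell+2k}^{(\epsilon)}$ with pairwise distinct degrees, giving $\dim V_{\ell+2k}^{(\epsilon)} \ge k+1$ and, a fortiori, $\dim V_{\ell+2k+1}^{(\epsilon)} \ge k+1$.

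For the upper bound I would proceed degree by degree via the recurrences \eqref{eq:rec11}--\eqref{eq:rec22}. For a putative solution $Q$ of exact degree $N$, the highest-degree pieces of \eqref{eq:rec12} and \eqref{eq:rec21} read $(2\epsilon-(n-m))\beta_{n,m}=0$ and $(2\epsilon+(n-m))\gamma_{n,m}=0$ at $n+m=N$, which admits a non-trivial solution only when $N \equiv 2\epsilon \equiv \ell \pmod 2$. Hence if $N$ has the opposite parity from $\ell$ no solution of exact degree $N$ exists and $V_N^{(\epsilon)}=V_{N-1}^{(\epsilon)}$, which is the first of the two claimed identities. When $N = \ell + 2k$ instead, the relations derived in Proposition \ref{prop:existencePoly} linking leading $\beta$ and $\gamma$ coefficients through \eqref{eq:rec11}, \eqref{eq:rec22} at $n+m=N+1$ reduce the leading data to a single free parameter. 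One can then subtract a suitable scalar multiple of $Q_0^{(\epsilon)}\HRabi{\epsilon}^{k}$ from $Q$ to cancel the leading coefficients, placing the difference in $V_{N-1}^{(\epsilon)}$, so $\dim V_N^{(\epsilon)} \le \dim V_{N-1}^{(\epsilon)}+1$.

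Combining these bounds with the base cases $\dim V_{\ell-1}^{(\epsilon)}=0$ and $\dim V_\ell^{(\epsilon)}=1$ and inducting on $n$ will produce $\dim V_{\ell+2k}^{(\epsilon)}=\dim V_{\ell+2k+1}^{(\epsilon)}=k+1$, which is exactly the content of the two stated increment formulas. The main technical obstacle I anticipate is the matching step in the upper-bound argument: one must verify that the single-parameter family of admissible leading coefficients at degree $\ell+2k$ is genuinely realized by the leading coefficients of $Q_0^{(\epsilon)}\HRabi{\epsilon}^{k}$. This amounts to tracking how multiplication by $\HRabi{\epsilon}$ transports the free parameters $\beta_{(N+2\epsilon)/2,(N-2\epsilon)/2}$ and $\gamma_{(N-2\epsilon)/2,(N+2\epsilon)/2}$ across degrees and checking that the resulting top components are non-zero, so that the subtraction argument indeed lowers the degree.
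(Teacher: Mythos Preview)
Your approach is correct and follows essentially the same line as the paper: both rely on the parity constraint coming from the top-degree pieces of \eqref{eq:rec12}--\eqref{eq:rec21} to rule out solutions of degree $\ell+2k+1$, and both use that the leading data at an admissible degree is one-dimensional to show the dimension jumps by exactly one there.

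The only real difference is organizational. The paper handles the increment step with an \emph{abstract} reference solution $A$ of degree $m+1$ (whose existence is already guaranteed by Proposition~\ref{prop:existencePoly}) and subtracts a scalar multiple of $A$ from an arbitrary $B$; it does not need to identify $A$ with $Q_0^{(\epsilon)}(\HRabi{\epsilon})^{k}$ at this stage, and only afterwards shows separately that $Q_0^{(\epsilon)}(\HRabi{\epsilon})^{k}$ is a solution of the right degree. Your version fuses these two steps by using $Q_0^{(\epsilon)}(\HRabi{\epsilon})^{k}$ directly as the reference solution, which is slightly more economical but forces you to check up front that its degree is exactly $\ell+2k$. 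This is the ``technical obstacle'' you flag, but it is not a genuine difficulty: the top entry $a^{\ell}$ (or $(a^{\dag})^{\ell}$) of $Q_0^{(\epsilon)}$ multiplied by the leading diagonal term $(a^{\dag}a)^{k}$ of $(\HRabi{\epsilon})^{k}$ survives with nonzero coefficient, since no lower-degree term can cancel it. Alternatively, you can sidestep the issue entirely by following the paper and using an abstract $A$; since the leading coefficients at degree $\ell+2k$ form a one-dimensional family, \emph{any} nonzero solution of that degree will do for the subtraction argument.
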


\begin{proof}
  Suppose that $n= \ell+2k$ for some $k$ and a polynomial solution of \eqref{eq:algcomm} of degree $n+1$. By the proof of Proposition \ref{prop:existencePoly} and in particular, recurrence relations \eqref{eq:rec12} and \eqref{eq:rec21}, we see that all the coefficients of degree $n+1$ vanish and we are reduced to a polynomial solution of degree $n$. Therefore, $\dim V_{n+1}^{(\epsilon)} = \dim V_{n}^{(\epsilon)}$.

  On the other hand, let us consider the case $m= \ell+2 k +1$, in this case there are nonzero polynomial solutions of degree $m+1$.
  Let us denote a fixed solution of degree $m+1$ by $A$ normalized to have leading coefficients $\pm 1$. Clearly, if $S$ is a basis of
  $\dim V_{m}^{(\epsilon)}$ the set $S \cup \{A\}$ is a linear independent set in $\dim V_{m+1}^{(\epsilon)}$.

  Next, let $B \in  \dim V_{m+1}^{(\epsilon)}$, that is, $B$ is an arbitrary solution of \eqref{eq:algcomm} of degree at most $m+1$. First, if
  the degree of $B$ is $m$ or smaller, then it is generated by the elements of $S$. On the other hand, if the degree $B$ is  exactly $m+1$,
  then by the recurrence relations \eqref{eq:rec12} and \eqref{eq:rec21}, the coefficients of degre $m+1$ and $m$ are determined up
  to a constant factor $0 \neq \alpha \in \C$. Therefore, $B-\alpha A = C$ is a polynomial solution of degree $m-1$ and $B = \alpha A + C$. This proves
  that $S \cup \{A\}$ is a basis of $\dim V_{m+1}^{(\epsilon)}$ and the result follows. 
\end{proof}

The reason we introduce the vector space $V_{n}^{(\epsilon)}$ is that for higher degrees ($n \geq 2\epsilon$) the corresponding uniqueness statement
(up to constant) of Proposition \ref{prop:existencePoly} does not hold since arbitrary linear combinations of smaller degree solutions can be added to any given solution.

\begin{prop}
  For $n \ge 0$ and $\epsilon \in \frac12 \Z$ with $\ell = 2|\epsilon|$, the operator $Q_0^{(\epsilon)} (\HRabi{\epsilon})^{n}$ is a polynomial solution of degree
  $\ell+2n$ of \eqref{eq:algcomm}.
\end{prop}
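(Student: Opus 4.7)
The plan splits into two steps: verifying the commutation relation \eqref{eq:algcomm}, which is essentially automatic, and then establishing the exact degree, which is the only substantive point.

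For the commutation, I would simply iterate the defining identity of $Q_0^{(\epsilon)}$. By construction it satisfies $\widetilde{\HRabi{\epsilon}}\,Q_0^{(\epsilon)} = Q_0^{(\epsilon)}\,\HRabi{\epsilon}$, so right-multiplying by $(\HRabi{\epsilon})^n$ immediately yields
\[
\widetilde{\HRabi{\epsilon}}\bigl(Q_0^{(\epsilon)}(\HRabi{\epsilon})^n\bigr) = Q_0^{(\epsilon)}(\HRabi{\epsilon})^{n+1} = \bigl(Q_0^{(\epsilon)}(\HRabi{\epsilon})^n\bigr)\HRabi{\epsilon},
\]
which is \eqref{eq:algcomm} for $Q = Q_0^{(\epsilon)}(\HRabi{\epsilon})^n$. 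No computation is involved.

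For the degree, the upper bound $\ell+2n$ is immediate: the matrix $\HRabi{\epsilon}$ has diagonal entries of degree $2$ (dominated by $a^\dag a$) and off-diagonal entries of degree $0$, so it has componentwise degree at most $2$, making $(\HRabi{\epsilon})^n$ of degree at most $2n$ and the full product of degree at most $\ell+2n$. The main (minor) obstacle is to rule out a drop in degree from cancellation of top-degree terms. For this I would pass to the associated graded algebra of the standard total-degree filtration on $\C[a,a^\dag]$, which is the commutative polynomial ring $\C[x,y]$, an integral domain. Lifting this filtration to ${\rm Mat}_2(\C[a,a^\dag])$ entrywise gives associated graded ${\rm Mat}_2(\C[x,y])$, and the degree-$2$ symbol of $\HRabi{\epsilon}$ is $xy\cdot{\bm I}$ (the off-diagonal $\Delta$ entries are of degree $0<2$ and drop out of the top symbol).

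Since $(xy)^n\,{\bm I}$ is central and nonzero in ${\rm Mat}_2(\C[x,y])$, multiplicativity of the symbol map gives that $(\HRabi{\epsilon})^n$ has symbol $(xy)^n\,{\bm I}$ at the top filtration level $2n$. By Proposition~\ref{prop:existencePoly}, $Q_0^{(\epsilon)}$ has a nonzero degree-$\ell$ symbol in ${\rm Mat}_2(\C[x,y])$; multiplying this nonzero matrix by the central scalar $(xy)^n$ preserves nonzeroness, because $\C[x,y]$ is an integral domain and the multiplication is entrywise by $(xy)^n$. Hence the symbol of $Q_0^{(\epsilon)}(\HRabi{\epsilon})^n$ at degree $\ell+2n$ is nonzero, so its degree is exactly $\ell+2n$. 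Beyond this symbol-level nonvanishing check, I do not anticipate any real difficulty.
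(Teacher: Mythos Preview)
Your commutation step is precisely the paper's proof: the paper records the single line
\[
\widetilde{\HRabi{\epsilon}}\, Q_0^{(\epsilon)} (\HRabi{\epsilon})^{n} = Q_0^{(\epsilon)}\HRabi{\epsilon} (\HRabi{\epsilon})^{n} = Q_0^{(\epsilon)} (\HRabi{\epsilon})^{n}\HRabi{\epsilon}
\]
and declares the result immediate, without addressing the degree claim at all.

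Your treatment therefore goes beyond the paper: you actually justify that the degree is exactly $\ell+2n$, which the paper leaves tacit. The associated-graded argument is correct and clean. Passing to the Bernstein filtration on the Weyl algebra gives $\mathrm{gr}\,\C[a,a^\dag]\simeq\C[x,y]$, the top symbol of $\HRabi{\epsilon}$ is the central scalar matrix $xy\cdot{\bm I}$, and since $Q_0^{(\epsilon)}$ has a nonzero symbol in degree $\ell$ by Proposition~\ref{prop:existencePoly}, entrywise multiplication by $(xy)^n$ in the integral domain $\C[x,y]$ cannot annihilate it. This both rules out cancellation and pins down the degree, filling a small gap the paper is content to leave. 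No issues.
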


\begin{proof}
  The result immediately follows from
  \[
    \widetilde{\HRabi{\epsilon}} Q_0^{(\epsilon)} (\HRabi{\epsilon})^{n} = Q_0^{(\epsilon)}\HRabi{\epsilon} (\HRabi{\epsilon})^{n} = Q_0^{(\epsilon)} (\HRabi{\epsilon})^{n}\HRabi{\epsilon}.
  \]
\end{proof}

\begin{cor} \label{cor:structurePoly}
  Let  $\epsilon \in \frac12 \Z$, $N\geq 0$ and set $\ell = 2|\epsilon|$ . The set
  \[
    \left\{ Q_0^{(\epsilon)},Q_0^{(\epsilon)}\HRabi{\epsilon}, Q_0^{(\epsilon)}(\HRabi{\epsilon})^{2}, \cdots, Q_0^{(\epsilon)}(\HRabi{\epsilon})^{N} \right\},
  \]
  is a basis of $V^{(\epsilon)}_{\ell+2N}$. In particular, we have $\dim V^{(\epsilon)}_{\ell+2N} = N+1$. \qed
\end{cor}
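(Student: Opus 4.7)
The plan is straightforward since the corollary is essentially a bookkeeping consequence of the two results immediately preceding it. First I would establish the dimension formula $\dim V^{(\epsilon)}_{\ell+2N} = N+1$ by induction on $N$. The base case $N=0$ is exactly Proposition \ref{prop:existencePoly}, which gives $\dim V^{(\epsilon)}_{\ell} = 1$. For the inductive step, applying the preceding proposition twice shows that
\[
\dim V^{(\epsilon)}_{\ell+2(N+1)} = \dim V^{(\epsilon)}_{\ell+2N+2} = \dim V^{(\epsilon)}_{\ell+2N+1} + 1 = \dim V^{(\epsilon)}_{\ell+2N} + 1,
\]
using that $\ell+2N+1 = \ell + 2N + 1$ falls in the ``dimension-raising'' case and $\ell+2N = \ell + 2N$ falls in the ``dimension-preserving'' case of that proposition.

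Next, I would check that each of the $N+1$ listed elements lies in $V^{(\epsilon)}_{\ell+2N}$. This is immediate from the preceding proposition, which asserts that $Q_0^{(\epsilon)}(\HRabi{\epsilon})^n$ is a polynomial solution of \eqref{eq:algcomm} of degree $\ell+2n$, so for $n \leq N$ it belongs to $V^{(\epsilon)}_{\ell+2n} \subseteq V^{(\epsilon)}_{\ell+2N}$.

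Finally, I would verify linear independence by a degree argument. Since $Q_0^{(\epsilon)}(\HRabi{\epsilon})^n$ has components of degree exactly $\ell+2n$ (the leading terms cannot cancel: on the commutative associated graded of $\C[a,a^\dag]$ the top-degree part of $\HRabi{\epsilon}$ is $a^\dag a$, which is a non-zero-divisor, so multiplying by it strictly increases the degree of the nonzero top-degree part of $Q_0^{(\epsilon)}$), any nontrivial linear combination $\sum_{n=0}^{N} c_n Q_0^{(\epsilon)}(\HRabi{\epsilon})^n$ has total degree equal to $\ell + 2n^{*}$, where $n^{*}$ is the largest index with $c_{n^{*}} \neq 0$; in particular it cannot be zero. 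Since we have $N+1$ linearly independent elements in a space of dimension $N+1$, the set is a basis.

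The only subtlety, and the one place a reader might pause, is the assertion that $Q_0^{(\epsilon)}(\HRabi{\epsilon})^n$ has degree \emph{exactly} $\ell+2n$ rather than merely at most $\ell+2n$; but this is the only real obstacle, and it is dispatched by the associated-graded argument above (equivalently, by noting that the symbol of $Q_0^{(\epsilon)}(\HRabi{\epsilon})^n$ in the commutative polynomial ring $\C[x,y]$ is the product of the symbols, and the symbol of $\HRabi{\epsilon}$ is $xy$, which has no zero-divisors).
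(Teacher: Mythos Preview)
Your argument is correct and is exactly the intended one: the paper marks the corollary with \qed\ because it follows immediately from the two preceding propositions (the dimension count and the fact that $Q_0^{(\epsilon)}(\HRabi{\epsilon})^n$ is a solution of degree $\ell+2n$), and you have simply filled in the routine details. Your care in verifying that the degree is \emph{exactly} $\ell+2n$ via the associated graded is a nice touch; note that it works because the top symbol of $\HRabi{\epsilon}$ in ${\rm Mat}_2(\C[x,y])$ is $xy\cdot I_2$, which is central and not a zero-divisor there.
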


Therefore, for $\epsilon \in \frac12 \Z$ any polynomial solution $Q$ of \eqref{eq:algcomm} is of the form $Q= Q_{0}^{(\epsilon)}p(\HRabi{\epsilon})$ for
certain polynomial $p \in \C[x]$. Next, by using Corollary \ref{cor:structurePoly}, we show that the square of the operator
$J_{0}^{(\epsilon)}$ is a polynomial in $\HRabi{\epsilon}$.

We need the following lemma.

\begin{lem} \label{lem:annQ}
  For $\epsilon \in \frac12 \Z$, the matrix $Q_{0}^{(\epsilon)}$ has no non-trivial polynomial right annihilators. That is, there is no $A \in {\rm Mat}_2(\C[a,a^{\dag}])$ such that $Q_{0}^{(\epsilon)} A = 0$ except for the zero matrix. 
\end{lem}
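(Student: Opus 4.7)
The plan is to show that $Q_0^{(\epsilon)}$ is a left non-zero-divisor in ${\rm Mat}_2(\C[a,a^\dag])$. The tool is the Bernstein filtration on the Weyl algebra $\C[a,a^\dag]$ (declaring $a, a^\dag$ of degree one), whose associated graded ring is the commutative polynomial ring $\C[x,y]$; in particular $\C[a,a^\dag]$ is a Noetherian domain and $\deg(uv) = \deg u + \deg v$ holds on nonzero products. Writing $A = \begin{bmatrix} p & q \\ r & s \end{bmatrix}$ and $Q_0^{(\epsilon)} = \begin{bmatrix} \alpha & \beta \\ \gamma & \delta \end{bmatrix}$, the equation $Q_0^{(\epsilon)} A = 0$ decouples column-wise, so it is enough to show that $\alpha p + \beta r = 0$ and $\gamma p + \delta r = 0$ force $(p, r) = (0, 0)$.

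The essential input is the degree pattern of the entries of $Q_0^{(\epsilon)}$, extracted from the proof of Proposition~\ref{prop:existencePoly}. For $\epsilon = 0$ one has $Q_0^{(0)} = \sigma_x$ and the claim is immediate; for $\epsilon > 0$ (the case $\epsilon < 0$ being mirrored), after the normalization chosen in that proposition the unique surviving top-degree coefficients $\beta_{\ell, 0}$ and $\gamma_{0, \ell}$ are nonzero, so $\deg\beta = \deg\gamma = \ell$, while the recurrences \eqref{eq:rec11} and \eqref{eq:rec22} impose $\alpha_{n,m} = \delta_{n,m} = 0$ whenever $n + m = \ell$, giving $\deg\alpha, \deg\delta \leq \ell - 1$. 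Thus the off-diagonal entries are strictly of higher degree than the diagonal entries.

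Assuming for contradiction that $(p, r) \neq (0, 0)$, the cases $p = 0$ and $r = 0$ are immediately eliminated: $p = 0$ together with $\beta \neq 0$ and the domain property force $r = 0$ from $\beta r = 0$, and symmetrically $r = 0$ forces $p = 0$ via $\gamma \neq 0$. Hence both $p$ and $r$ are nonzero, so $\alpha p = -\beta r \neq 0$ forces $\alpha \neq 0$, and analogously $\delta \neq 0$. Applying degree additivity in both defining equations then yields
\[
\deg\alpha - \deg\gamma = \deg\beta - \deg\delta,
\]
yet the left-hand side is $\leq -1$ and the right-hand side is $\geq 1$, a contradiction. The identical argument on the second column gives $q = s = 0$, hence $A = 0$. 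The only genuinely nontrivial ingredient is the strict off-diagonal-vs-diagonal degree gap in $Q_0^{(\epsilon)}$, a structural consequence of the construction in the existence proof, so no serious further obstacle arises.
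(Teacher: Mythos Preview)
Your proof is correct, and it takes a genuinely different route from the paper's. The paper argues by degree reduction: writing
\[
Q_0^{(\epsilon)} = \begin{bmatrix} 0 & a^\ell \\ (-1)^\ell (a^\dag)^\ell & 0 \end{bmatrix} + \overline{Q_0},\qquad \deg\overline{Q_0}<\ell,
\]
it observes that in each entry of $Q_0^{(\epsilon)}A$ the unique term of degree $N+\ell$ comes from the off-diagonal monomial multiplied by the leading part of one $A_{i,j}$, forces that leading part to vanish, and iterates until $A=0$. You instead turn the same structural fact (off-diagonal entries of degree exactly $\ell$, diagonal entries of degree at most $\ell-1$) into a single numerical contradiction: the two column equations, together with degree additivity in the Weyl algebra, yield $\deg\alpha-\deg\gamma=\deg\beta-\deg\delta$ with the left side $\le -1$ and the right side $\ge 1$. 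Your argument is shorter and avoids the induction; the paper's is more explicit about the leading symbol. Both rely on the same input from Proposition~\ref{prop:existencePoly}. One small remark: you invoke that $\C[a,a^\dag]$ is a Noetherian domain, but only the domain property (equivalently, multiplicativity of degree via the associated graded) is actually used.
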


\begin{proof}
  Suppose $A \in {\rm Mat}_2(\C[a,a^{\dag}])$ is a right annihilator of $Q_{0}^{(\epsilon)}$ given entrywise by polynomials $(A)_{i,j}= A_{i,j}(a,a^{\dag})$
  of fixed degree $N$.
  Let us write $\ell = 2|\epsilon|$, then by the proof of Proposition \ref{prop:existencePoly} after a normalization, we can write
  \[
     Q_{0}^{(\epsilon)} =
     \begin{bmatrix}
       0 & a^{\ell} \\
       (-1)^\ell (a^\dag)^{\ell} & 0
     \end{bmatrix}
     + \overline{Q_0},
   \]
   where $\overline{Q_0}$ is a polynomial matrix of degree strictly less than $\ell = 2 |\epsilon|$.
   We then have
   \[
     0 = Q_{0}^{(\epsilon)} A =
     \begin{bmatrix}
       a^{\ell} A_{2,1} & a^{\ell} A_{2,2} \\
       (-1)^{\ell} (a^{\dag})^{\ell} A_{1 ,1} & (-1)^{\ell} (a^{\dag})^{\ell} A_{1 ,2}
     \end{bmatrix}
     + \overline{Q_0} A.
   \]
   Let us write
   \[
     A_{2,1}(a,a^{\dag}) = \sum_{i=0}^{N} c^{(N)}_i a^i (a^{\dag})^{N-i} + \overline{A_{2,1}}(a,a^\dag),
   \]
   where $\deg \overline{A_{2,1}} < N$.
   Let us now consider
   \[
     a^{\ell} A_{2,1} = \sum_{i=0}^{N} c^{(N)}_i a^{\ell+i} (a^{\dag})^{N-i}  + a^{\ell} \overline{A_{2,1}},
   \]
   since $\deg a^{\ell} \overline{A_{2,1}} < N+\ell$ and $\deg \overline{Q_0} A < N+\ell$, we have
   \[
     \sum_{i=0}^{N} c^{(N)}_i a^{\ell+i} (a^{\dag})^{N-i} = 0,
   \]
   which implies $c^{(N)}_i = 0$, and $\deg A_{2,1} \leq N-1$. By repeated application of this procedure
   we conclude that $A_{2,1} =0$. The cases of $A_{1,1}$,$A_{1,2}$ and $A_{2,2}$ are analogous, and we
   conclude that $A=0$, as desired.
\end{proof}

\begin{prop} \label{prop: non-zero}
  For $\epsilon\in \frac12 \Z$, we have $J_{\epsilon}^2 = p_{\epsilon}(\HRabi{\epsilon};g,\Delta)$, where $p_{\epsilon} \in \C[x,g,\Delta]$ of degree $\ell =2|\epsilon|$ on the variable $x$.
\end{prop}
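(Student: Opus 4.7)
The plan is to express $J_\epsilon^2$ as an element of ${\rm Mat}_2(\C[a,a^\dag])$ that commutes with $\HRabi{\epsilon}$, and then use Corollary~\ref{cor:structurePoly} together with Lemma~\ref{lem:annQ} to identify it as a polynomial in $\HRabi{\epsilon}$. First, exploiting $\mathcal{P}^2 = I$ and $\mathcal{P} a \mathcal{P} = -a$, $\mathcal{P} a^\dag \mathcal{P} = -a^\dag$, I would write
\[
  J_\epsilon^2 = \mathcal{P} Q_0^{(\epsilon)} \mathcal{P} Q_0^{(\epsilon)} = \widetilde{Q_0^{(\epsilon)}}\, Q_0^{(\epsilon)},
\]
where $\widetilde{Q_0^{(\epsilon)}}(a,a^\dag) := Q_0^{(\epsilon)}(-a,-a^\dag)$. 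Setting $M := J_\epsilon^2$, this gives $M \in {\rm Mat}_2(\C[a,a^\dag])$, and $[M, \HRabi{\epsilon}] = 0$ is immediate from $[J_\epsilon, \HRabi{\epsilon}] = 0$.

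The central step is to reduce $M$ to a configuration covered by Corollary~\ref{cor:structurePoly}. Multiplying the intertwining relation $\widetilde{\HRabi{\epsilon}} Q_0^{(\epsilon)} = Q_0^{(\epsilon)} \HRabi{\epsilon}$ on the right by $M$ and invoking $\HRabi{\epsilon} M = M \HRabi{\epsilon}$ yields
\[
  \widetilde{\HRabi{\epsilon}} \bigl(Q_0^{(\epsilon)} M\bigr) = \bigl(Q_0^{(\epsilon)} M\bigr) \HRabi{\epsilon},
\]
so $Q_0^{(\epsilon)} M$ is itself a polynomial solution of \eqref{eq:algcomm}. Corollary~\ref{cor:structurePoly} then supplies a polynomial $p_\epsilon \in \C[x,g,\Delta]$ with $Q_0^{(\epsilon)} M = Q_0^{(\epsilon)}\, p_\epsilon(\HRabi{\epsilon})$, and Lemma~\ref{lem:annQ} lets me cancel $Q_0^{(\epsilon)}$ from the left to conclude $J_\epsilon^2 = p_\epsilon(\HRabi{\epsilon})$.

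What remains is to verify $\deg_x p_\epsilon = \ell$ and $p_\epsilon \not\equiv 0$. By the proof of Proposition~\ref{prop:existencePoly} and the normalization recorded in Lemma~\ref{lem:annQ}, only the off-diagonal entries of $Q_0^{(\epsilon)}$ attain the maximal degree $\ell$, with leading monomials $a^\ell$ and $(-1)^\ell (a^\dag)^\ell$ respectively. A direct expansion then shows that the diagonal entries of $M = \widetilde{Q_0^{(\epsilon)}}\, Q_0^{(\epsilon)}$ have top-degree part $(a^\dag)^\ell a^\ell$ (degree $2\ell$, with nonzero coefficient), whereas the $(1,1)$ entry of $p_\epsilon(\HRabi{\epsilon})$ has leading term $c\,(a^\dag a)^{\deg p_\epsilon}$; matching degrees forces $\deg_x p_\epsilon = \ell$. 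Non-vanishing of $p_\epsilon$ also follows from the self-adjointness of $J_\epsilon$, since $J_\epsilon^2 = 0$ would force $J_\epsilon = 0$ and contradict $Q_0^{(\epsilon)} \neq 0$. The main obstacle I anticipate is the leading-term bookkeeping in this last paragraph—checking that the top-degree contribution coming from the off-diagonal block of $\widetilde{Q_0^{(\epsilon)}}\, Q_0^{(\epsilon)}$ does not accidentally cancel—since the algebraic step reducing $M$ to a polynomial in $\HRabi{\epsilon}$ is essentially immediate once Corollary~\ref{cor:structurePoly} and Lemma~\ref{lem:annQ} are in hand.
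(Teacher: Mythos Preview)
Your argument is correct and follows essentially the same route as the paper's proof. The paper phrases the key reduction by observing that $J_\epsilon^3 = \mathcal{P}\,Q_0^{(\epsilon)} J_\epsilon^2$ commutes with $\HRabi{\epsilon}$, hence $Q_0^{(\epsilon)} J_\epsilon^2$ solves \eqref{eq:algcomm}; you reach the same conclusion by multiplying the intertwining relation on the right by $M=J_\epsilon^2$, and both then invoke Corollary~\ref{cor:structurePoly} and Lemma~\ref{lem:annQ} in the same way. Your identity $J_\epsilon^2=\widetilde{Q_0^{(\epsilon)}}\,Q_0^{(\epsilon)}$ coincides with the paper's $J_\epsilon^2=(Q_0^{(\epsilon)})^\dag Q_0^{(\epsilon)}$ via the self-adjointness of $J_\epsilon$, and your leading-term check (that the degree-$2\ell$ contribution $a^\ell(a^\dag)^\ell$ from the off-diagonal block does not cancel) is a slightly more explicit version of the paper's one-line degree count.
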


\begin{proof}
  First, it is clear that
  \( J_{\epsilon}^2 = \left(Q_{0}^{(\epsilon)}\right)^\dag Q_{0}^{(\epsilon)} \in {\rm Mat}_2(\C[a,a^{\dag}])\).
  Then, let us consider
  the operator
  \[
    J_{\epsilon}^3 = \mathcal{P} Q_{0}^{(\epsilon)} J_{\epsilon}^2 = \mathcal{P} Q_{0}^{(\epsilon)} \left(Q_{0}^{(\epsilon)}\right)^\dag Q_{0}^{(\epsilon)}.
  \]
  Since $J_{\epsilon}^3$ commutes with $\HRabi{\epsilon}$, we see that $Q_{0}^{(\epsilon)} J_{\epsilon}^2$ is a polynomial solution of \eqref{eq:algcomm}
  and by Corollary \ref{cor:structurePoly} we have
  \[
    Q_{0}^{(\epsilon)} J_{\epsilon}^2 = Q_{0}^{(\epsilon)} p(\HRabi{\epsilon}),
  \]
  it follows that
  \[
    Q_{0}^{(\epsilon)}\left( J_{\epsilon}^2 - p(\HRabi{\epsilon})\right) = 0,
  \]
  and we have $J_{\epsilon}^2 = p(\HRabi{\epsilon}) $ by Lemma \ref{lem:annQ}. Since the degree of $J_{\epsilon}^2$ as a polynomial solution is $2\ell$ the
  degree of $p_{\epsilon}$ must be $\ell$.
\end{proof}

\begin{rem}

  In this paper when we refer to the polynomial $p_{\pm \frac{\ell}{2}}(x;g,\Delta)$ we consider the normalization such that h
  the leading term is equal to $(2g)^{2\ell}$ (see Example \ref{ex:PolyP}).
\end{rem}

\begin{ex} \label{ex:PolyP}
  The first few polynomials $p_{\epsilon}(x;g,\Delta)$ are given by
  \begin{align*}
    p_{0}(x;g,\Delta) &= 1,  \\ 
    p_{\frac12}(x;g,\Delta) &=  (2g)^2x + (4g^4+2g^2+\Delta^2),\\
    p_{1}(x;g,\Delta) &=  p_{\frac12}(x;g,\Delta)^2 - 4 g^4 + \Delta^2,\\
    p_{\frac32}(x;g,\Delta) &=  p_{\frac12}(x;g,\Delta)^3 + (- 28 g^4 + 4 \Delta^2)p_{\frac12}(x;g,\Delta) + 4 (6 g^6 + 12 g^8 + \Delta^2 + 3g^4 \Delta^2).
  \end{align*}
\end{ex}

\medskip

Finally, we complement the discussion by showing that the constant term of the polynomial $p_\epsilon(x;g,\Delta)$ is not the trivial
polynomial in $g$ and $\Delta$.

\begin{thm} 
  Let $\epsilon \in \frac12 \Z$ and $p_\epsilon(x;g,\Delta)$ the polynomial of Proposition \ref{prop: non-zero}. Then, for any $\alpha \in \R$, $p_{\epsilon}(\alpha;g,\Delta)$ is not identically $0$.
\end{thm}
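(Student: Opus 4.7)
The plan is to exhibit a monomial in $p_\epsilon(x;g,\Delta)\in\mathbb{C}[x,g,\Delta]$ whose coefficient is a non-zero constant in $\mathbb{C}$ and independent of $x$; such a monomial automatically survives in every specialization $p_\epsilon(\alpha;g,\Delta)$, showing it is a non-zero polynomial in $\mathbb{C}[g,\Delta]$. The natural candidate, suggested by Example~\ref{ex:J2} and Example~\ref{ex:PolyP}, is the pure $\Delta^{2\ell}$ monomial: for $\ell=0,1,2,3$ one checks by inspection that $p_\epsilon(x;g,\Delta)$ contains $\Delta^{2\ell}$ with coefficient $1$, independent of $x$ and $g$.

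To establish this in general, write $p_\epsilon(x;g,\Delta)=\sum_{k=0}^{\ell}c_k(g,\Delta)\,x^k$ with $c_\ell=(2g)^{2\ell}$. I would first establish the degree bound $\deg_\Delta c_k\le 2\ell-k$ by tracking $\Delta$-degrees on both sides of the operator identity $J_\epsilon^2=(Q_0^{(\epsilon)})^\dagger Q_0^{(\epsilon)}=p_\epsilon(\HRabi{\epsilon};g,\Delta)$. Induction on the $a,a^\dagger$-degree, descending from $\ell$, using the recurrence relations \eqref{eq:rec11}--\eqref{eq:rec22} for the coefficients of $Q_0^{(\epsilon)}$, yields $\Delta$-degree at most $\ell$ for each entry of $Q_0^{(\epsilon)}$: the leading $a^\ell$-coefficient $\beta_{\ell,0}$ is proportional to $g^\ell$ (no $\Delta$) by the chosen normalization, and each descent in the $a,a^\dagger$-degree introduces at most one factor of $\Delta$ through the $\Delta(\gamma_{n,m}-\beta_{n,m})$ or $\Delta(\delta_{n,m}-\alpha_{n,m})$ terms. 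Hence $\deg_\Delta(J_\epsilon^2)\le 2\ell$, and since $\deg_\Delta((\HRabi{\epsilon})^k)=k$ (from the $\Delta\sigma_x$ term), the bound $\deg_\Delta c_k\le 2\ell-k$ follows. In particular, for $k\ge 1$ no $c_k$ contributes a $\Delta^{2\ell}$ term to $p_\epsilon$, so the coefficient of $\Delta^{2\ell}$ in $p_\epsilon(\alpha;g,\Delta)$ coincides with that of $c_0(g,\Delta)$.

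To see the latter is a non-zero constant, I would identify the $\Delta^\ell$-part of $Q_0^{(\epsilon)}$: the recurrences restricted to maximal $\Delta$-degree show it is a constant (in $a,a^\dagger,g$) $2\times 2$ matrix $U$ which, as in Example~\ref{ex:J2}, is $\pm I$ or $\pm\sigma_x$ depending on the parity of $\ell$. Then $U^\dagger U$ is a positive multiple of $I$, and matching the $\Delta^{2\ell}$-coefficients of $J_\epsilon^2$ and of $p_\epsilon(\HRabi{\epsilon};g,\Delta)$—noting that the contributions of $c_k(\HRabi{\epsilon})^k$ with $k\ge 1$ to $\Delta^{2\ell}$ enter only through the $\sigma_x^k$-structure of the leading $\Delta$-part of $(\HRabi{\epsilon})^k$—isolates the $I$-component and shows that the coefficient of $\Delta^{2\ell}$ in $c_0$ equals this positive scalar. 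Consequently $p_\epsilon(\alpha;g,\Delta)$ contains $\Delta^{2\ell}$ with a non-zero constant coefficient independent of $\alpha$, and so is not identically zero in $\mathbb{C}[g,\Delta]$ for any $\alpha\in\mathbb{R}$.

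The main obstacle is the identification of $U$ in general. While the low-degree pattern strongly suggests $U$ is essentially unitary ($\pm I$ or $\pm\sigma_x$), extracting this rigorously from the recurrence at maximal $\Delta$-degree for every $\ell$ requires a careful level-by-level analysis; the difficulty lies in keeping track of the matrix structure through the four coupled recurrences while restricting to the top $\Delta$-power, where the $g$-dependent terms of the recurrence decouple and one is left with a purely linear-algebraic system for $U$.
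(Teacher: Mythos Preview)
Your route is genuinely different from the paper's. The paper reduces to $\alpha=0$ and shows directly in ${\rm Mat}_2(\C[a,a^\dag])$ that $\HRabi{\epsilon}$ cannot right-divide $J_\epsilon^2$: using only the degree-$\ell$ and degree-$(\ell{-}1)$ parts of $Q_0^{(\epsilon)}$ (which the recursion gives explicitly), it computes the coefficients of $a^\ell(a^\dag)^\ell$, $a^{\ell-1}(a^\dag)^\ell+a^\ell(a^\dag)^{\ell-1}$ and $a^{\ell-1}(a^\dag)^{\ell-1}$ in $(J_\epsilon^2)_{1,1}$ and checks that no $\bm{M}_{1,1},\bm{M}_{1,2}\in\C[a,a^\dag]$ can match them in $\bm{M}_{1,1}(aa^\dag+g(a+a^\dag)+1-\epsilon)+\bm{M}_{1,2}\Delta$. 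This avoids any global $\Delta$-degree bookkeeping.

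Your $\Delta^{2\ell}$ idea has, besides the acknowledged obstacle of identifying $U$, a concrete flaw in the ``$\sigma_x^k$-isolation'' step. With only the bound $\deg_\Delta c_k\le 2\ell-k$, the $\Delta^{2\ell}$-part of $c_k(\HRabi{\epsilon})^k$ is $[\Delta^{2\ell-k}\text{-coeff of }c_k](g)\cdot\sigma_x^k$, and for every \emph{even} $k\ge 2$ one has $\sigma_x^k=I$; matching against $U^\dagger U=cI$ therefore yields only $\sum_{k\text{ even}}[\Delta^{2\ell-k}\text{-coeff of }c_k](g)=c$, which does not isolate the $k=0$ term. (Your derivation of $\deg_\Delta c_k\le 2\ell-k$ from $\deg_\Delta J_\epsilon^2\le 2\ell$ is also not valid as stated: a bound on the $\Delta$-degree of a sum does not bound that of each summand.) What would repair this is the sharper bound $\deg_\Delta c_k\le 2(\ell-k)$, obtainable from your own ``one $\Delta$ per unit drop in $(a,a^\dag)$-degree'' claim by peeling off the $c_k$ recursively via top $(a,a^\dag)$-degrees of $(\HRabi{\epsilon})^k$; with it every $k\ge1$ contributes nothing to $\Delta^{2\ell}$ and the matching is immediate---but you then still owe the proof that the $(a,a^\dag)$-degree-$0$, $\Delta^\ell$-coefficient $U$ of $Q_0^{(\epsilon)}$ is a \emph{nonzero} constant matrix, which is the substantive inductive computation you have not carried out.
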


\begin{proof}

  Let $\ell = 2|\epsilon|$. It is sufficient to consider the case $\alpha=0$, since if an operator commutes with $\HRabi{\epsilon}$, then it also commutes with $\HRabi{\epsilon} - \alpha$ (and define the same $J_\epsilon$). In additions, since $J_\epsilon^2$ is a polynomial in $\HRabi{\epsilon}$, it is enough to show that
  $\HRabi{\epsilon}$ does not divide $J_\epsilon^2$ from the right.
  The case $\ell=0$ is obvious and the case $\ell=\frac12$ can be verified directly so we assume that $\ell>1$.

  Suppose that  $ J_{\epsilon}^2  = \bm{M} \HRabi{\epsilon} $ with $\bm{M} \in {\rm Mat}_2(\C[a,a^{\dag}]) $.
  We assume that $Q_0^{(\epsilon)}$ is given in the normalization where the coefficients of $a^{\ell}$ and $(a^{\dag})^{\ell}$ are $\pm 1$.

  Let us denote the components of $Q_0^{(\epsilon)}$  by $\alpha,\beta,\gamma,\delta$ as before. By the recurrence relations of Proposition \ref{prop:existencePoly}
  we see that
  \begin{align*}
    \alpha(a,a^{\dag}) &= (-1)^{\ell} \frac{\Delta}{2 g} \left( (a^{\dag})^{\ell-1} - a (a^\dag)^{\ell-1} +  \cdots + (-1)^\ell a^{\ell-1}\right) + \alpha_1(a,a^{\dag}), \\
    \gamma(a,a^\dag) &= (-1)^{\ell} \left( (a^{\dag})^\ell + \ell g (a^{\dag})^{\ell-1} \right) + \gamma_1(a,a^{\dag}),
  \end{align*}
  where $\alpha_1(x,y)$ and $\gamma_1(x,y)$ are polynomials of degree $\ell-2$ or smaller.

  Next, we note that
  \[
    J_{\epsilon}^2 = (Q_{0}^{(\epsilon)})^\dag Q_{0}^{(\epsilon)} =
    \begin{bmatrix}
      \alpha(a,a^{\dag})^{\dag} & \gamma(a,a^{\dag})^{\dag} \\
      \beta(a,a^{\dag})^{\dag} & \delta(a,a^{\dag})^{\dag}
    \end{bmatrix}
    \begin{bmatrix}
      \alpha(a,a^{\dag}) & \beta(a,a^{\dag}) \\
      \gamma(a,a^{\dag}) & \delta(a,a^{\dag})
    \end{bmatrix},
  \]
  and in particular,
  \[
    (J_{\epsilon}^2)_{1,1} = \alpha(a,a^{\dag})^{\dag} \alpha(a,a^{\dag}) + \gamma(a,a^{\dag})^{\dag} \gamma(a,a^{\dag}).
  \]

  The product $\alpha(a,a^{\dag})^{\dag} \alpha(a,a^{\dag})$ is equal to
  \begin{align*}
    &\left((-1)^{\ell} \frac{\Delta}{2 g}  \sum_{i=0}^{\ell-1}(-1)^i a^{\ell-1-i}(a^{\dag})^{i} + \alpha_1(a,a^{\dag})^{\dag}\right) \left((-1)^{\ell} \frac{\Delta}{2 g}  \sum_{i=0}^{\ell-1}(-1)^i a^{i}(a^{\dag})^{\ell-1-i} + \alpha_1(a,a^{\dag})\right) \\
    &\quad = \frac{\Delta^2}{4 g^2} \left( \sum_{i=0}^{\ell-1}(-1)^i a^{\ell-1-i}(a^{\dag})^{i}\right) \left(\sum_{i=0}^{\ell-1}(-1)^i a^{i}(a^{\dag})^{\ell-1-i} \right) + \alpha_1(a,a^{\dag})^\dag \alpha_1(a,a^{\dag}),
  \end{align*}
  notice that for each monomial $a^{\ell-1-i}(a^{\dag})^{i}$ in the first sum, there is exactly one monomial in the second sum,
  namely $a^{i}(a^{\dag})^{\ell-1-i}$, such that their product is equal to $a^{\ell-1}(a^{\dag})^{\ell-1}$ plus some lower degree terms.
  Consequently,
  \[
    \alpha(a,a^{\dag})^{\dag} \alpha(a,a^{\dag}) = \frac{\ell \Delta^2}{4 g^2} a^{\ell-1} (a^\dag)^{\ell-1} + \alpha_{2}(a,a^{\dag}),
  \]
  where the degree of the polynomial $\alpha_{2}(a,a^{\dag})$ is at most $2\ell-2$ but does not contain the monomial $a^{\ell-1}(a^\dag)^{\ell-1}$.
  Similarly, the product $\gamma(a,a^{\dag})^{\dag} \gamma(a,a^{\dag})$ is equal to
  \begin{align*}
    &\left( (-1)^{\ell} \left( a^\ell + \ell g a^{\ell-1} \right) + \gamma_1(a,a^{\dag})^{\dag} \right) \left( (-1)^{\ell} \left( (a^{\dag})^\ell + \ell g (a^{\dag})^{\ell-1} \right) + \gamma_1(a,a^{\dag}) \right) \\
    &\qquad = a^{\ell} (a^{\dag})^\ell + \ell g (a^{\ell-1}(a^{\dag})^{\ell} + a^{\ell}(a^{\dag})^{\ell-1} ) + \ell^2 g^2 a^{\ell-1} (a^{\dag})^{\ell-1} + \gamma_2(a,a^\dag),
  \end{align*}
  where the degree of the polynomial $\gamma_2(a,a^\dag)$ is at most $2\ell-2$. Notice that $\gamma_2(a,a^\dag)$ does not contain the monomial
  $a^{\ell-1}(a^\dag)^{\ell-1}$ since the only monomial of degree $\ell$ in $\gamma(a,a^{\dag})$ is $(a^\dag)^\ell$.
  
  Summing up, the upper-left entry component of $J^2$ is given by
  \[
    (J_{\epsilon}^2)_{1,1} = a^{\ell}(a^{\dag})^{\ell} + \ell g(a^{\ell-1}(a^{\dag})^{\ell} + a^{\ell}(a^{\dag})^{\ell-1}) + \left( \ell^2 g^2 + \frac{\ell}{4}\frac{\Delta^2}{g^2}\right) a^{\ell-1}(a^\dag)^{\ell-1} + R(a,a^{\dag}),
  \]
  where the degree of $R(a,a^{\dag})$ is at most $2\ell-2$ but it does not contain the monomial $a^{\ell-1}(a^\dag)^{\ell-1}$.
  
  Next, we notice that for any $m,n \geq 0$ and constant $c \in \C$, we have
  \begin{align} \label{eq:prod2}
    a^n (a^{\dag})^m \left( a a^{\dag} + g(a+ a^{\dag}) + c \right) &= a^{n+1}(a^{\dag})^{m+1} + g a^{n}(a^{\dag})^{m+1}  + g a^{n+1} (a^{\dag})^{m}  + (c - k) a^{m}(a^\dag)^{n} - m a^{n} (a^{\dag})^{m-1}.
  \end{align} 
  Notice that the upper-left entry of $\bm{M} \HRabi{\epsilon}$ is given by
  \[
    \bm{M}_{1,1} (a a^{\dag} + g(a+a^{\dag}) + 1-\epsilon) + \bm{M}_{1,2} \Delta,
  \]
  by \eqref{eq:prod2}, it is clear that there are no $\bm{M}_{1,1}$ and $\bm{M}_{1,2}$ in $\C[a,a^{\dag}]$ such that
  \[
    \bm{M}_{1,1} (a a^{\dag} + g(a+a^{\dag}) + 1-\epsilon) + \bm{M}_{1,2} \Delta = a^{\ell}(a^{\dag})^{\ell} + \ell g(a^{\ell-1}(a^{\dag})^{\ell} + a^{\ell}(a^{\dag})^{\ell-1}) + \left( \ell^2 g^2 + \frac{\ell}{4}\frac{\Delta^2}{g^2}\right) a^{\ell-1}(a^\dag)^{\ell-1} + R(a,a^{\dag}).
  \]
  Indeed, suppose that $\bm{M}_{1,1} = a^{\ell-1}(a^\dag)^{\ell-1}$, then there is no choice of $\bm{M}_{1,2} \in \C[a,a^{\dag}]$ such that the coefficients
  of $a^{\ell-1}(a^{\dag})^{\ell}$,$a^{\ell}(a^{\dag})^{\ell-1}$ and $\frac{\ell}{4}\frac{\Delta^2}{g^2} a^{\ell-1}(a^\dag)^{\ell-1}$ in both sides of the equation above are
  satisfied simultaneously.

  Similarly, if $\bm{M}_{1,2} = \frac1{\Delta} a^{\ell}(a^\dag)^{\ell}$ no choice of $\bm{M}_{1,1} \in \C[a,a^{\dag}]$ satisfies the equation above.
  This proves that $J^2$ is not (left or right) divisible  by $\HRabi{\epsilon}$, completing the proof.
\end{proof}

\begin{rem} %\label{rem:curve}

  In Example \ref{ex:J2}, for $\epsilon = \frac12$ and $\epsilon=1$ the constant term $p_{\epsilon}(0;g,\Delta)$ of the polynomial $p_{\epsilon}(x;g,\Delta)$ is given by
  a polynomial in $g$ and $\Delta$ with positive coefficients, thus $p_{\epsilon}(0;g,\Delta)>0$ for $g,\Delta>0$.
  
  This is not the case in general, for instance, for $\epsilon = \frac32$, we have 
  \[
    p_{\epsilon}(0;g,0) = 8\left(2 g^2-1\right) \left(2 g^2+1\right) \left(2 g^2+3\right).
  \]
  It is clear that the polynomial $p_{\epsilon}(0;g,\Delta)$ may take negative values or zero for particular values of $g,\Delta>0$.
  % Notice further that for e.g., if  $p_{\epsilon}(0;g_0,\Delta_0)=0$ while $\mathcal{H}_0\neq\{0\}$, then $0$ is not an eigenvalue of $\HRabi{\epsilon}$ when $(g,\Delta)=(g_0, \Delta_0)$.
  
  It also may be interesting to study the algebraic curves 
  \[
    C^{(\alpha)}_{\frac{\ell}2} = \{ (g,\Delta) \in \R^2 \, | \, p_{\frac{\ell}2}(\alpha;g,\Delta) = 0 \}
  \]
  with $\alpha\in \R$, and their structure (singularities, $L$-functions or congruent zeta function of the non-singular curve over finite fields).
  In Figure \ref{fig:Ccurves} we show the first non-trivial cases of the curves $C^{(0)}_{\frac{\ell}2}$.
  
  Note that the plane-curves given by $p_{\frac{3}2}(\alpha;g,\Delta) = 0$ define a one-parameter family of elliptic curve (rational coefficients)
  with respect to the variables $(2g)^2, \,\Delta^2$ with parameter $\alpha$. 

  \begin{figure}[h!]
    \centering
    \subfloat[$\ell=3$]{
      \includegraphics[height=4cm]{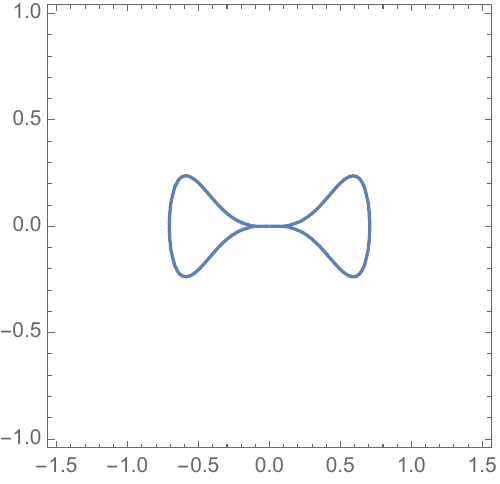}}
    ~
    \subfloat[$\ell=4$]{
      \includegraphics[height=4cm]{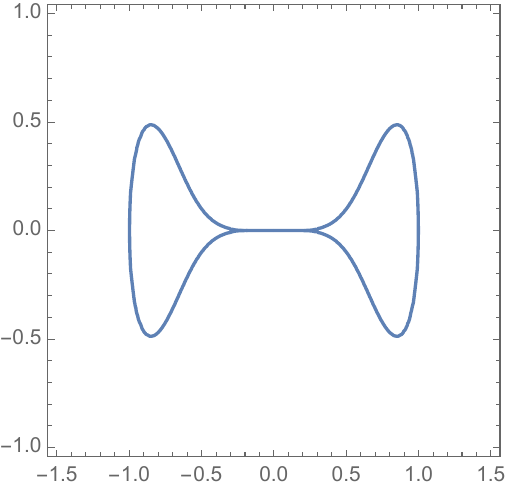}}
    ~
    \subfloat[$\ell=5$]{
      \includegraphics[height=4cm]{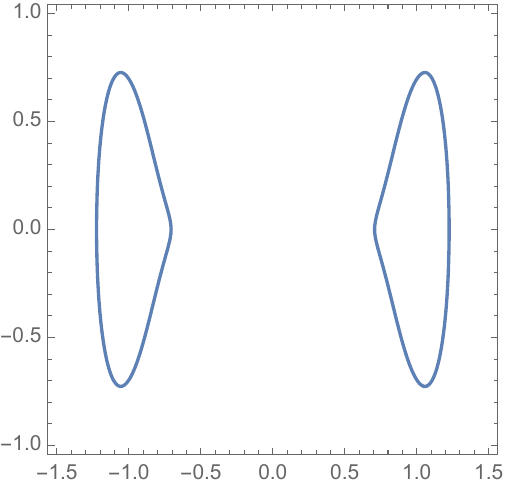}} \\
    \caption{Curves $C^{(0)}_{\frac{\ell}2}$ for $\ell=3,4,5$.}
    \label{fig:Ccurves}
\end{figure}

\end{rem}

%------------------------------------------------------
\section*{Acknowledgements}
This work was partially supported by Grant-in-Aid for Scientific Research (C) No.16K05063 and is by No.20K03560, JSPS,
JST CREST Grant Number JPMJCR14D6, Japan, and by the Deutsche Forschungsgemeinschaft (DFG, German Research Foundation) under Grant No.439943572. 
%------------------------------------------------------

%------------------------------------------------------

\begin{flushleft}

\bigskip

 Cid Reyes-Bustos \par
 Department of Mathematical and Computing Science, School of Computing, \par
 Tokyo Institute of Technology \par
 2 Chome-12-1 Ookayama, Meguro, Tokyo 152-8552 JAPAN \par\par
 \texttt{reyes@c.titech.ac.jp}

 \bigskip

 Daniel Braak \par
 Department of Physics, Augsburg University, \par
 Universit\"atsstr. 1, 86159 Augsburg, GERMANY \par
 \texttt{daniel.braak@physik.uni-augsburg.de}
 
 \bigskip

 Masato Wakayama \par
 Department of Mathematics, School of Science, \par
 Tokyo University of Science \par
 1-3 Kagurazaka, Shinjyuku-ku, Tokyo 162-8601 JAPAN \par\par
 \texttt{wakayama@rs.tus.ac.jp}

\end{flushleft}


\begin{thebibliography}{99}
%------------------------------------------------------

\bibitem{Ar1989}
  V.I.~Arnold:
  \textit{Mathematical Methods of Classical Mechanics},
  Springer, New York (1989).

\bibitem{A2020}
  S.~Ashhab:
  \textit{Attempt to find the hidden symmetry in the asymmetric quantum Rabi model},
  Phys. Rev. A \textbf{101} (2020), 023808.

\bibitem{B2011}
  D.~Braak:
  \textit{Integrability of the Rabi Model},
  Phys. Rev. Lett. \textbf{107} (2011), 100401.
  
  \bibitem{B2013MfI}
  D.~Braak: 
  \textit{Analytical solutions of basic models in quantum optics},
  in ``Applications + Practical Conceptualization + Mathematics = fruitful Innovation, Proceedings of the Forum of Mathematics for Industry 2014" eds. R.~Anderssen, et al., 75-92, Mathematics for Industry \textbf{11}, Springer, 2016.
  
\bibitem{B2019}
  D.~Braak:
  \textit{Symmetries in the Quantum Rabi Model},
  Symmetry \textbf{11} (2019), 1259.

\bibitem{C2011}
  S.~Caux and J.~Mossel:
  \textit{Remarks on the notion of quantum integrability},
  J. Stat. Mech. \textbf{2011} (2011), P02023.

\bibitem{GD2013}
  B.~Gardas and J.~Dajka:
  \textit{New symmetry in the Rabi model},
  J. Phys. A: Math. Theor. \textbf{46} (2013), 265302.

\bibitem{KRW2017}
  K.~Kimoto, C.~Reyes-Bustos and M.~Wakayama:
  \textit{Determinant expressions of constraint polynomials and degeneracies of the asymmetric quantum Rabi model}.
  Int. Math. Res. Notices (2020), Published online 20 April 2020 (pp.1-87).
  https://academic.oup.com/imrn/advance-article-abstract/doi/10.1093/imrn/rnaa034/5822751?redirectedFrom=fulltext

  \bibitem{LB2015JPA}
  Z.-M.~Li and M.T.~Batchelor:
  \textit{Algebraic equations for the exceptional eigenspectrum of the generalized Rabi model},
  J. Phys. A: Math. Theor. \textbf{48} (2015), 454005.

\bibitem{MBB2020}
  V.~V.~Mangazeev, M.~T.~Batchelor and V.~V.~Bazhanov:
  \textit{The hidden symmetry of the asymmetric quantum Rabi model},
  Preprint 2020. arXiv:2010.02496

\bibitem{R1972}
  M.~Reed and B.~Simon:
  \textit{Methods of Modern Mathematical Physics I: Functional Analysis},
  Academic Press, New York (1972). %, 224.

\bibitem{RBBW2021}
C.~Reyes-Bustos and M.~Wakayama:
\textit{Degeneracy and hidden symmetry of the asymmetric quantum Rabi model}, in preparation.


\bibitem{EVBSS2017}
D.~Rossatto, C-J.~Villas-B\^oas, M.~Sanz and E.~Solano:
\textit{Spectral classification of coupling regimes in the quantum Rabi model}, 
Phys. Rev. A \textbf{96} (2017), 013849.

\bibitem{SK2017}
J.~Semple and M.~Kollar:  
  \textit{Asymptotic behavior of observables in the asymmetric quantum Rabi model},
  J. Phys. A: Math. Theor. \textbf{51} (2017), 044002.

\bibitem{W2016JPA}
  M.~Wakayama:  
  \textit{Symmetry of asymmetric quantum Rabi models}. J. Phys. A: Math. Theor.
  \textbf{50} (2017), 174001.



\bibitem{Y2017}
  F.~Yoshihara \textit{et al.}:
  \textit{Superconducting qubit-oscillator circuit beyond the ultrastrong-coupling regime},
  Nature Physics \textbf{13} (2017), 44.

\end{thebibliography}
\end{document}